\newif\ifsingle
\newcommand{\vs}[1]{{\color{green}\bf[VS: #1]}}
\def\ignore#1\endignore{}
\newcolumntype{h}{@{}>{\ignore}l<{\endignore}} %% blind out some colums
\newcolumntype{x}[1]{%
>{\centering\hspace{0pt}}p{#1}}%
\newtheorem{proposition}{Proposition}
\newtheorem{corollary}{Corollary}
\newtheorem{definition}{Definition}
\newtheorem{remark}{Remark}
\def\@copyrightspace{\relax}
\begin{document}
%\title{The Pursuit of Efficiency: \\ When Stochastic ABSF Brings Fairness\\ in D2D-enabled 5G Networks}
%\title{The Pursuit of Efficiency: When Stochastic ABSF Brings Fairness in D2D-enabled 5G Networks}
%\title{Trading off Performance for Efficiency in D2D-enabled 5G Networks}
%\title{Joint Optimization of Interference Coordination and Opportunistic D2D Forwarding in 5G Networks}
%\title{Stochastic Interference Orchestration in D2D-enabled Mobile Networks
\title{Fair Stochastic Interference Orchestration with Cellular Throughput Boosted via Outband Sidelinks 
%\\\vm{(Alternative title, with subtitle)\\Repurposing eICIC and D2D for wireless cells\\ 
%\huge A new stochastic method for the evaluation of performance and the optimization of capacity}
}

\author{
Christian Vitale\IEEEauthorrefmark{1},
Vincenzo Sciancalepore\IEEEauthorrefmark{2},
Vincenzo Mancuso\IEEEauthorrefmark{3}%\thanks{V. Sciancalepore and C. Vitale contributed equally to this work.}
\\
\IEEEauthorrefmark{1} Politecnico di Torino \quad\IEEEauthorrefmark{2} NEC Laboratories Europe GmbH \quad\IEEEauthorrefmark{3} IMDEA Networks Institute
%\vspace{-1em}
}

\ifsingle
\author{
Vincenzo Sciancalepore\IEEEauthorrefmark{1},
Christian Vitale\IEEEauthorrefmark{2}\IEEEauthorrefmark{3},
Vincenzo Mancuso\IEEEauthorrefmark{2}%\thanks{V. Sciancalepore and C. Vitale contributed equally to this work.}
\\
$\!\!$\IEEEauthorrefmark{1}NEC Europe Ltd. $\;$\IEEEauthorrefmark{2}IMDEA Networks Institute $\;$\IEEEauthorrefmark{3}Universidad Carlos III de Madrid$\!\!$
%\vspace{-1em}
}
\fi

\maketitle
\psfull

\setlength{\textfloatsep}{5pt}% Remove \textfloatsep

%% Abstract
\begin{abstract}

Time-domain Inter-Cell Interference Coordination (ICIC) is recognized as the main driver towards efficient and effective ultra-dense network deployments.  Almost Blank Subframe (ABS)---as key-example of ICIC---has been recently standardized so as to achieve high spectral efficiency. 

As we show in this article, adopting ABS implies non-trivial complexity to be effective in multicellular environments with heterogeneous cell coverage and user density. Nonetheless, no fairness determinism is guaranteed by ICIC and ABS in particular. Instead, we analytically show that a compound exploitation of ABS with outband {\it sidelinks} used for Device-to-Device (D2D) communications on unlicensed bands not only allows to abate the complexity of operating ABS, but also results in unexpectedly high levels of fairness.

%We provide a deep theoretical analysis of ABS when the millimeter-wave technology is adopted as outband D2D technology. 
Based on the analysis, we formulate a convex optimization problem to {\it stochastically} make ABS decisions while providing proportional fairness guarantees.
Our results prove that, compared to a legacy system, stochastically orchestration of ABS largely boosts fairness while retaining a notable throughput gain offered by mmWave outband sidelinks used for relay. 
%\vm{citare qualche numero a proposito del guadagno}

\end{abstract}
%\vspace{-2mm}

\thispagestyle{empty}	%remove number on the first page

\section{Introduction}
  
\IEEEPARstart{T}{he} novel radio access network (RAN) design brings a number of innovations while opening new challenges because of the data demands exponential increase and of the drastic densification of wireless network access points~\cite{5Gnetworks}. In the business game of a massive deployment of advanced network systems, the verticals segment such as Virtual Mobile Network Operators (MNVOs) and Over-the-top applications (OTTs) are dictating the terms of efficiency and ultra-reliability aspects that, in turn, result in a revolutionary way of conceiving the communications orchestration.

Inter-Cell Interference Coordination (ICIC) and Device-to-Device (D2D) communications represent two key-elements along this innovation roadmap towards advanced RAN deployments%~\cite{xxx}
. Therefore, in this work we focus on their compound performance looking at a practical orchestration with a purely stochastic approach, which, as we show, results in low operational complexity.

Novel enhanced ICIC (eICIC) schemes inherit most of existing approaches for, e.g., beamstearing~\cite{coordinatedBeam_twc}, power control~\cite{power_control_d2d} and massive MIMO%\cite{XX}
, shedding light on implementable and feasible solutions, such as the time-domain traffic scheduling. This paves the road towards a scalable, flexible and high-efficient RAN solution able to accommodate new impelling vertical service requirements. %In this context, a practical and already standardized ICIC technique, namely {\it almost-blank subframe} (often referred to as ABSF or ABS), allows to mute ({\it blank}) a fixed portion of radio frames and thus to reduce the interference caused in given time-slots while improving the channel quality for over-surrounded transmissions points~\cite{feicic_TWC}. It is worth mentioning that some physical resource elements, such as the cell-specific reference symbols (CRS), must be anyway active to provide channel pilot measurements and estimations, hence the term {\it almost blank} in the ABSF acronym. Nevertheless, the additional interference generated by these physical resource elements can be easily handled by interference cancellation techniques (see, e.g., \cite{uplink_cancellation_twc16} and \cite{cancellation_jsac14}).

As our \underline{first main contribution} in this article, we prove that it might be challenging to mark eICIC techniques as a throughput-guaranteeing solution. This is the consequence of imposing ({\it almost}) interference-free transmissions at the expenses of drastically limiting the number of transmissions in neighboring cells. Indeed, eICIC are PHY-layer techniques wherein throughput as well as fairness are naturally out of design scope.
Nevertheless, we show that eICIC offers the possibility to schedule base station activities with little overhead and low computational complexity, hence it can be turned to further enforce fairness. %\st{among cells, though at the expenses of system throughput. Indeed, we found the use of ABSF to achieve fairness quite intuitive, even though it was not designed for fairness purposes.}

If eICIC is properly adopted to provide %\st{transmission efficiency and} 
fairness but not system throughput guarantees, \emph{how is then possible to boost throughput performance?}
The answer lies on a number of innovative schemes proposed for enhancing RAN
%~\cite{xxx}
, leading to our \underline{second main contribution}: the introduction of wideband D2D to relay cellular traffic while being jointly orchestrated with the eICIC solution. 
We do not propose to use {\it inband} sidelinks (i.e., D2D links operated in addition to base station links over the same band) since they would compete for resource access with normal direct cellular links between the base station (gNB in 3GPP jargon indicating a generalized node B) and the user device (UE in 3GPP)~\cite{ed2d}. 
Instead, we propose {\it opportunistic relay} with sidelinks operated over unlicensed spectrum. By adopting directional and electronically steerable antennas for transmissions within a few tens of meters%~\cite{xxx}
, the resulting sidelinks are practically interference-free. This allows to re-think the RAN evolution from a different perspective: the user device. 
%This allows to re-think the RAN evolution including not only cell operators and infrastructure providers but also user devices in the loop: 
Specifically, a mobile node with excellent cellular capabilities can accept to relay traffic for its neighbors momentarily whereas other nodes experience poor channel qualities resulting in a better utilization of cellular radio resources and turning into reduced airtime utilization in the cell. 
Therefore, D2D relay might be  implemented in cellular networks to complement time-blanking techniques without giving up on throughput. Of course, the presence of groups of users leaning toward cooperation is key for the success of opportunistic relay. It could be seconded by widespread social network behaviors that naturally lead to the formation of user groups. For instance, people partake in sharing information and personal contents using short-range communications such as IEEE 802.11-based technologies~\cite{JSAC15_SGBP}, which have been also proposed to offload base station traffic using the D2D paradigm~\cite{asadi2014dronee}. 

%With the stochastic approach here described, we jointly orchestrate the operation of ABSF and D2D to coordinate inter-cell interference while achieving both high throughput and high fairness. 
%Deploying opportunistic D2D relay in an effective manner is challenging in a cellular context because the rapid densification of points of access makes the problem of guaranteeing high-quality links unaffordable with the resources of legacy D2D deployment scenarios. However, a technological solution is ready to serve the next generation of D2D relay communications: millimeter waves communications (mmWave) facilities targeted by the IEEE 802.11ad standard.
AS concerns the technology for implementing outband sidelinks, we consider millimeter waves communications (mmWave), which is also under consideration in 3GPP for 5G networks. Besides, commercial smartphones already implement mmWave protocols under the IEEE 802.11ad standard framework (a.k.a. WiGig).
This new technology achieves virtually unlimited speeds with respect to the cellular capacity, thus going with D2D relay groups one step beyond the classical D2D paradigm. In the rest of the manuscript, we use the term mmD2D to refer to mmWave outband D2D.  
%{\color{red}While WiGig has progressed fast in the last years, no off-the-shelf RAN solution is currently available for introducing the mm-waves communications paradigm as an interference-free access medium. Nevertheless, ABSF will still make the difference even when the transmission interference problem will be partially alleviated (but not completely overcome) with directional mm-waves connections, as they might affect the same spatial area.}
%

%\vs{I would totally remove this paragraph. The compound impact of ABSF and eD2D relay consists in regulating the number of simultaneous cellular transmissions while at the same time reducing the presence of vulnerable users experiencing poor channel conditions. Hence, fairness and transmission efficiency can be jointly targeted by simultaneously controlling inter-cell activity (with ABSF) and intra-cell packet relay (with eD2D) within a novel unified framework.}

%This is possible thanks to the adoption of D2D in cellular networks, so that a clear {\it D2D trademark} can be spotted in the described scenario.

To fully understand the potentials of eICIC and mmD2D in combination, and to make the basis for advanced control architectures for 5G networks, we derive a theoretical analysis. Specifically, we $(i)$ present a theoretical study on the limitations of eICIC and on the advantages of mmD2D sidelinks, $(ii)$ derive stochastic conditions to show how eICIC can be orchestrated to steer user fairness, $(iii)$ formulate novel and convex optimization problems to set stochastic eICIC activity patterns by leveraging the advantages of mmD2D relay and $(iv)$ show that the joint operation of stochastic eICIC and mmD2D is practical and brings dramatic gains with respect to state-of-the-art solutions. 
%\vm{manca una frase per quantificare il gain che si pu\`o ottenere}

The rest of the article is structured as follows. 
%\vm{To be reviewed}
Section~\ref{s:overview} presents the cellular framework we focus on. 
In Section~\ref{s:analysis} we derive a novel analytical model to study the network behavior. In Section~\ref{s:example} we exemplify the impact of eICIC and mmD2D on a realistic network topology. In Section~\ref{s:optimization} we formulate two new problems for the stochastic optimization of eICIC in presence of mmD2D sidelinks used for relay, respectively under static and dynamic user density conditions. Section~\ref{s:implementation} presents a possible implementation for our solution,
while in  Section~\ref{s:eval} we validate the model and report on performance evaluation. 
Section~\ref{s:related} discusses the related work and Section~\ref{s:conclusions} summarizes and concludes the article.

\section{A Novel D2D-assisted ICIC Framework}
\label{s:overview}

%{\color{red} ROAD MAP , FRAMEWORK DESCRIPTION }

We consider downlink transmissions in a cellular access network with a set $\mathcal{B}$ of interfering gNBs, operated on the same frequency band by the same operator. 
Users are provided with multi-RAT connectivity, i.e., LTE-A and IEEE 802.11ad physical interfaces. Base stations implement a subframe muting technique to control interference (ABS), while users can leverage D2D sidelinks to form relay groups. The base station elects a relay within the group. The relay is the responsible to handle the traffic of the entire group. The set of groups will be denoted by $\mathcal{C}$, and the size of group $c \in \mathcal{C}$ will be denoted as $U_c=|\mathcal{C}|$. 
Intra-group relay transmissions adopt WiGig, and the base stations select relay nodes opportunistically, according to whom is experiencing the best channel condition, similarly to what implemented with WiFi-Direct in~\cite{AMG16}. Throughout the paper we use {\it mmWave outband D2D} (mmD2D) to refer to intra-group D2D relay.  
We assume that all groups always have packets to receive, i.e., the downlink queue of each users' group is saturated.

In such mmWave D2D-assisted cellular framework, depicted in Fig.~\ref{fig:framework}, we propose a solution that retains the key strengths provided by mmD2D and ABS. However, differently from standard applications, we design a practical scheme to tune the use of ABS {\it stochastically}, ensuring user fairness additionally to inter-cell interference reduction, while at the same 
time counting on mmD2D to boost the system throughput by means of packet relay.
The building blocks of the framework outlined above are \emph{user groups}, \emph{mmWaves D2D sidelinks}, \emph{opportunistic relay election}, and \emph{ABS pattern generation}.

\ifsingle
\begin{figure} [!t]
\centering
	\includegraphics[trim={0 2cm 0.5cm 0},clip,width=0.65\textwidth]{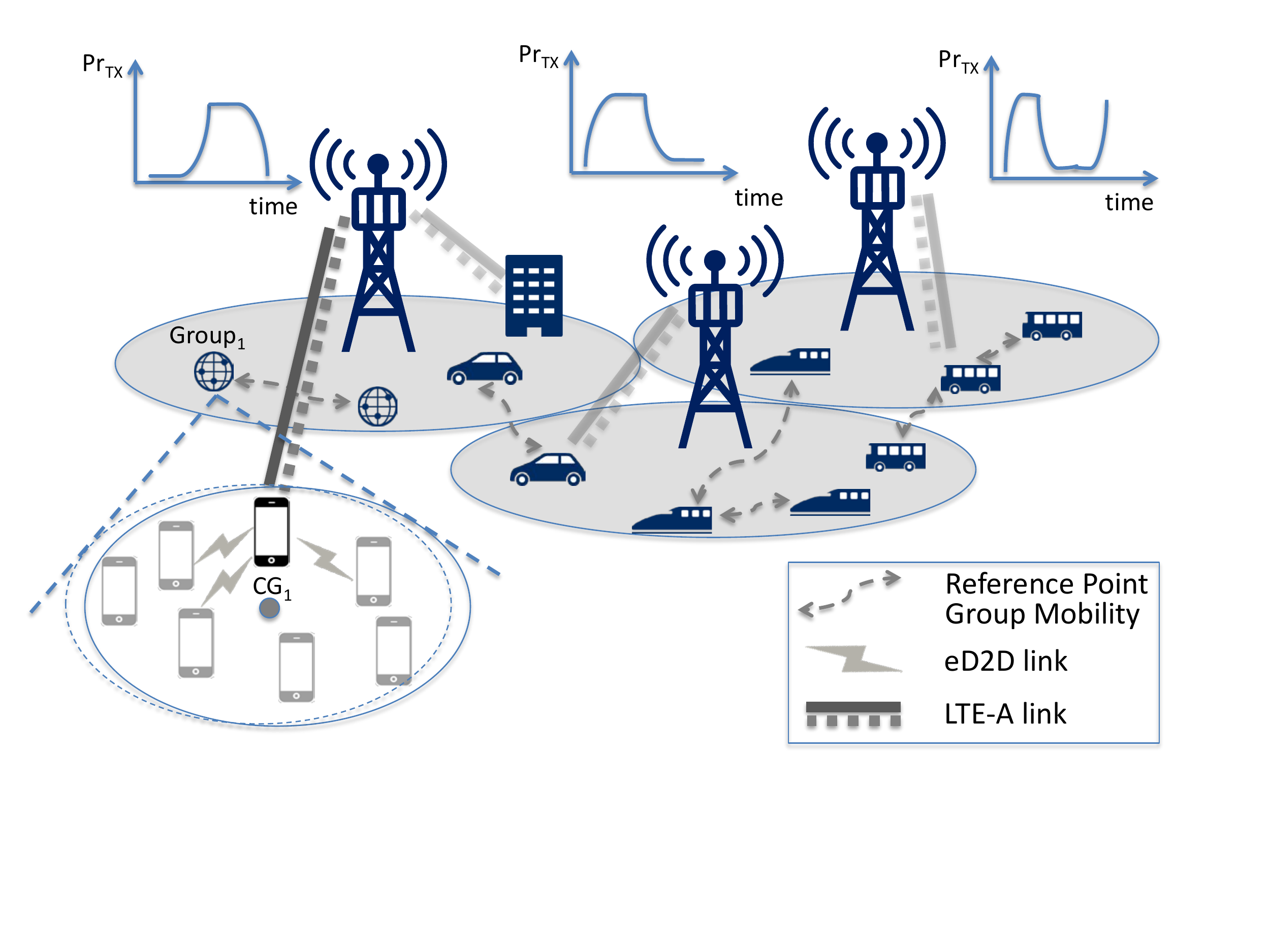}
		\vspace{-2mm}
	\caption{\vs{TO BE CHANGED} Enhanced D2D assisted cellular framework.}
	\label{fig:framework}
%	\vspace{-1em}
\end{figure}
\else
\begin{figure} [!t]
\centering
	\includegraphics[trim={0 2cm 0.5cm 0},clip,width=0.45\textwidth]{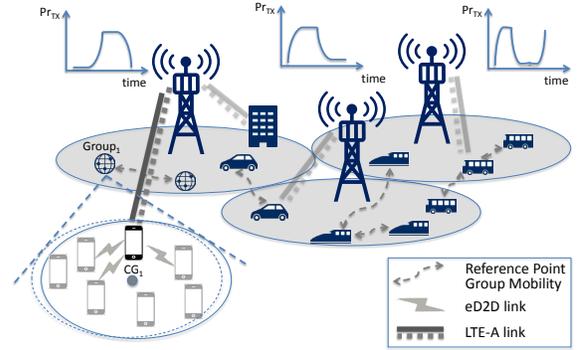}
		\vspace{-2mm}
	\caption{%\vs{TO BE CHANGED} 
	Enhanced D2D assisted cellular framework.}
	\label{fig:framework}
%	\vspace{-1em}
\end{figure}
\fi

We go beyond existing schemes, and apply the ABS paradigm to mmD2D-enabled networks wherein the use of bad channels is limited and the number of users accounting for interference reduces to the number of (mmD2D) groups. 
With our proposal, we neither impose the burden of making fully tailored per-cell ABS decisions in a centralized way, nor force any base station to generate myopic ABS patterns completely unaware of others' cell activity. Instead, we let a {\it network controller} generate and distribute stochastically built ABS patterns at low complexity.
% by 
%reducing the number of cellular users to a selected set of D2D relay nodes chosen because of their temporary good channel conditions. 
%\st{How often a relay node will receive data is a consequence of which ABS pattern has been selected by a  network controller, which stochastically builds the patterns to be distributed to the base stations.} 
Interestingly, this can be adjusted over time to pursue not only system throughput but also fairness. 
As analytically shown in Sections~\ref{s:analysis} to \ref{s:optimization}, a {\it stochastic} throughput-efficient and fairness-optimal strategy for ABS can be derived by leveraging the impact of opportunistic relay of cellular data traffic on the transmission efficiency achieved under (chosen) ABS blanking patterns.

\section{Analysis}
\label{s:analysis}

With the framework described in the previous section, we analyze the system to characterize transmission efficiency and throughput experienced by the users for each possible combination of transmitting gNBs. The analysis provides us with the tools needed to optimize the generation of gNB activity patterns stochastically. In addition, the analysis covers the case wherein mmD2D can be disabled so as to eventually evaluate analytically the overall impact. 

We compute the throughput of a user (or an mmD2D group of users) in isolation, i.e., when it attains all available cellular resources. We then model the system throughput in presence of multiple users/groups, i.e., the overall volume of traffic served by the RAN, assuming that user positions in the group are known. We finally show system throughput variations when movements of the relay groups occur. The overall model provides a useful guideline on the system performance optimization. 

%\subsection{Cooperative Group Analisys}
\subsection{Transmission Efficiency}
\label{transmission_efficiency}
%We consider a multi-cell scenario, in which a set  $\mathcal{B}$ of base stations are placed. 
%with a fixed inter-site distance (ISD). Please note that the analysis provided can be easily extended to cases in which base stations are not regularly placed in the scenario.
% \addcomment{for instance, London city deployment.}
Interfering cellular transmissions may cause a severe performance degradation. 
To analytically quantify the impact of interference we might use the Signal-to-Interference-plus-Noise-Ratio (SINR). Nevertheless, SINR does not account for the specific MCS used, and therefore, it does not account for the real efficiency of a transmission. We formally define the transmission efficiency of a group (of one or more users) as follows:

% \begin{definition}[User's Transmission efficiency]
% The efficiency $\zeta(x,y)$ of base station transmissions operated towards a destination located at $(x,y)$ is the average number of bits transmitted by the base station to that destination for each transmission symbol, considering that the base station always uses the fastest MCS allowed by the experienced SINR.
% \end{definition}

%\begin{definition}[mmD2D group's transmission efficiency]
\begin{definition}[Transmission efficiency]
The efficiency $\zeta_c(X_c,Y_c)$ of base station transmissions operated towards a group $c$ whose $U_c$ members are located at $(X_c,Y_c)=[(x_{1},y_{1}),...,(x_{U_c},y_{U_c})]$ is the average number of bits transmitted by the base station to the group for each transmission symbol, considering that the base station always uses the fastest MCS allowed by the best SINR experienced by the users in the group.
\end{definition}

%The throughput of a user or mmD2D group that obtains all transmission resources of a base station is proportional to the above-defined transmission efficiency. Thereby, modeling the transmission efficiency suffices to characterize both the maximum potential per-user throughput and the aggregate system throughput.

Computing transmission efficiency does not depend solely on the position and number of users in the group, but also on the mapping between SINR and MCS (for further details on MCS mapping examples we refer the reader to \cite{absf}). Depending on the distribution of the SINR, we may achieve largely different values for the transmission efficiency, even with the same SINR average. 
%We formalize the transmission efficiency computation in the following proposition.
Specifically, given the location of the users of group $c$ at positions $(X_c,Y_c)$ 
and denoting by $b_k$ the number of bits transmitted per symbol using MCS $k$,
the transmission efficiency for the group is
\begin{align}
\zeta_c (X_c, Y_c) & = 
\sum_{k \in \mathcal{M}} b_k 
\left [ 
F_c(T_k^{\max}) - F_c(T_k^{\min})
\right ], 
\label{eq:zetaformula}
\end{align}
where $F_c(x)$ denotes the cumulative distribution of the r.v. 
$\gamma_c (X_c,Y_c) = \max_{i \in c}\left\{\gamma_{ci} (X_c,Y_c)\right\}$,
being $\gamma_{ci} (X_c,Y_c)$ the SINR of user $i$ in group $c$. The summation in \eqref{eq:zetaformula} accounts for the number of bits per symbol transmitted by the users on a discrete set $\mathcal{M}$ of MCSs, as suggested by the standard~\cite{absf}, i.e., by casting the SINR function in a continuous subset of values comprised between $T_k^{\min}$ and $T_k^{\max}$, representing lower and upper SINR levels, respectively, for assigning the MCS $k$.

% \begin{proposition}
% \label{prop:cq_gen} 
% Given the location of the users of group $c$ at positions $(X_c,Y_c)$, the transmission efficiency for the group is computed as:
% %
% \begin{align}
% \zeta_c (X_c, Y_c) & = 
% \sum_{k \in \mathcal{M}} b_k 
% \left [ 
% F_c(T_k^{\max}) - F_c(T_k^{\min})
% \right ], 
% \label{eq:zetaformula}
% \end{align}
% %
% where $b_k$ are the bits transmitted per symbol using MCS $k$ and $F_c(x)$ is the cumulative distribution of the r.v. 
% %
% $$\gamma_c (X_c,Y_c) = \max_{i \in c}\left\{\gamma_{ci} (X_c,Y_c)\right\},$$
% %
% being $\gamma_{ci} (X_c,Y_c)$ the SINR of user $i$ in group $c$ when users are located at positions $(X_c,Y_c)$. 
% \end{proposition}

% \begin{proof}
% In our proposed scheme, the base station always transmits to the user with the highest SINR, which justifies the use of the r.v. $\gamma_c$ in Proposition~\ref{prop:cq_gen}. The summation in \eqref{eq:zetaformula} accounts for the number of bits per symbol transmitted by the users on a discrete set $\mathcal{M}$ of MCSs, as suggested by the standard~\cite{absf}, i.e., by casting the SINR function in a continuous subset of values comprised between $T_k^{\min}$ and $T_k^{\max}$, representing lower and upper SINR levels, respectively, for assigning the MCS $k$. 
% \end{proof}
% %

To compute the distribution $F_c(x)$, we consider that the SINRs experienced by users are independent random variables with averages imposed by the actual user positions. The independence comes from the fact that the fast fading process is very much affected by tiny position differences. 
As concerns the computation of the average of such random variables, we will consider two extreme cases and show that they yield similar results in practice. Specifically, we either consider that the average SINR depends on the exact position of the user, or that that position of a user can be approximated with the center of gravity of its group, so that all users see the same average signal and average interference.  

Having assumed that the SINR values in a group are independent, and considering that $\max_{i \in c}\left\{\gamma_{ci} (X_c,Y_c)\right\} \le x \iff  \forall i \in c, \gamma_{ci} (X_c,Y_c) \le x$, we can write 
\begin{align}
F_{c} (x) &= \prod_{i\in c} F_{\gamma_{ci}} (x).	
\label{eq:Fc_exact}
\end{align}

Instead, if we approximate the positions of the users with the center of gravity of the group, denoting by $\gamma_c^{\star}$ the SINR computed at the center of gravity, the following expression holds and approximates~\eqref{eq:Fc_exact}: 
\begin{align}
F_c (x) &= \left[F_{\gamma_c^{\star}}(x) \right]^{U_c}. 
\label{eq:Fc_center}
\end{align}
This approximation makes sense when a group mobility model
%, as the RPGM one mentioned in Section~\ref{ss:groups}, 
can be used to describe the dynamics of the user's topology.

Moreover, as pointed out in~\cite{Meyer73}, in a urban environment, the power received by a user from a base station at any given location follows a negative exponential distribution (whereas the instantaneous signal follows a Rayleigh distribution) whose average value only depends on the pathloss effect. 
%The noise is either negligible with respect to the interference or can be treated as a (small) constant term.  
Therefore, the distributions $F_{\gamma_{ci}}$and $F_{\gamma_{c}^{\star}}$ can be computed using the result reported in the following proposition.  

\begin{proposition}
\label{prop:CDF}
The distribution of the SINR $\gamma$ resulting from an exponential useful signal with average power $1/\lambda_S$, $k$ independently exponentially distributed interfering signals $I_j$ with average power $1/\lambda_j$,  
and additive Gaussian white noise with zero mean and power $N$ is, $\forall x \ge 0$,
% %
% \begin{align}
% F_{\gamma}(x) & = 1 - e ^ {-\lambda_S N x} \prod_{j=1}^k \frac{\lambda_j}{\lambda_j + x\lambda_S }.
% \label{eq:sinr}
% \end{align}
% %
%
\begin{align}
    \!\!\!F_{\gamma}(x) & = 1 - \frac{1}{\sqrt{1 + 2 \lambda_S N x}} \prod_{j=1}^k \frac{\lambda_j}{\lambda_j + x\lambda_S }.
    \label{eq:CDF}
\end{align}
\end{proposition}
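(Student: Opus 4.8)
The plan is to work with the complementary CDF $\bar F_{\gamma}(x)=\mathbb{P}(\gamma>x)$ for $x\ge 0$ and to exploit the memoryless / Laplace-transform structure of the exponential useful signal. Write the SINR as $\gamma = S/\bigl(W+\sum_{j=1}^{k}I_j\bigr)$, where $S$ is the instantaneous useful power (exponential, rate $\lambda_S$), $I_j$ the instantaneous interfering powers (exponential, rate $\lambda_j$), and $W$ is the instantaneous noise power, i.e.\ $W=G^2$ with $G\sim\mathcal{N}(0,N)$ the zero-mean Gaussian noise of variance/power $N$. Since $S$ is independent of the denominator and $\mathbb{P}(S>y)=\exp(-\lambda_S y)$ for $y\ge 0$, conditioning on $W$ and on $I_1,\dots,I_k$ yields $\mathbb{P}\bigl(\gamma>x\mid W,I_1,\dots,I_k\bigr)=\exp\bigl(-\lambda_S x\,(W+\sum_{j}I_j)\bigr)$. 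Taking expectations and using the mutual independence assumed in the model,
\[
\bar F_{\gamma}(x)=\mathbb{E}\!\left[e^{-\lambda_S x W}\right]\;\prod_{j=1}^{k}\mathbb{E}\!\left[e^{-\lambda_S x I_j}\right].
\]

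Next I would evaluate the two kinds of factor. For the interferers, each $I_j$ is exponential with rate $\lambda_j$, so its Laplace transform at $s=\lambda_S x$ is $\mathbb{E}[e^{-sI_j}]=\lambda_j/(\lambda_j+s)=\lambda_j/(\lambda_j+\lambda_S x)$, which reproduces the product term of \eqref{eq:CDF}. For the noise, substitute $W=G^2$ and compute the Gaussian integral
\[
\mathbb{E}\!\left[e^{-\lambda_S x G^2}\right]=\int_{-\infty}^{\infty}\frac{1}{\sqrt{2\pi N}}\,e^{-g^{2}/(2N)}\,e^{-\lambda_S x\,g^{2}}\,dg=\frac{1}{\sqrt{1+2\lambda_S N x}},
\]
the last step being the standard $\int e^{-a g^2}dg=\sqrt{\pi/a}$ with $a=(1+2\lambda_S N x)/(2N)$; equivalently, recognise $G^2/N$ as a $\chi^2_1$ variable whose MGF is $(1-2t)^{-1/2}$, evaluated at $t=-\lambda_S N x$. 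All quantities are finite for every $x\ge 0$ because $\lambda_S,\lambda_j,N>0$.

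Multiplying the two factors and using $F_{\gamma}=1-\bar F_{\gamma}$ gives exactly \eqref{eq:CDF}. The routine parts (the exponential Laplace transform, the Gaussian integral, the factorisation by independence) present no real difficulty; the one point that needs care is the modelling convention for the noise. "Gaussian noise with power $N$" must be read as an \emph{instantaneous} noise power $W=G^2$ with $G\sim\mathcal{N}(0,N)$, i.e.\ a single real Gaussian degree of freedom, since this is what produces the $1/\sqrt{1+2\lambda_S N x}$ factor; a deterministic noise floor $N$ would instead give $e^{-\lambda_S N x}$, and a complex-Gaussian (exponential) noise power would give $1/(1+\lambda_S N x)$. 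I expect stating and justifying this interpretation — rather than any computation — to be the main obstacle, together with making explicit that the same independence hypothesis already used for the fast-fading powers is what licenses the product form above.
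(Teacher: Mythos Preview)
Your proof is correct and follows essentially the same route as the paper: condition on the denominator, exploit the exponential tail of $S$ to reduce everything to Laplace transforms, factor by independence, then evaluate the exponential and Gaussian pieces separately. The only cosmetic differences are that the paper conditions first on the noise $Z$ and integrates out the interferers explicitly before unconditioning, whereas you condition on everything at once and work with $\bar F_\gamma$ rather than $F_\gamma$; your explicit remark that the noise must be modeled as $W=G^2$ with $G\sim\mathcal N(0,N)$ matches exactly the paper's assumption.
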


\noindent The proof is reported in the Appendix.

The average power levels used in the proposition can be computed with a legacy pathloss model, based on distances between signal/interference sources and receivers, and environmental parameters~\cite{seybold2005introduction}.

Let us define each potential combination of active base stations as one of the possible {\it states} of the system, and let us denote it by $s$. Let $\mathcal{B}_s$ be the set of base stations transmitting data in state $s$, whereas $|\mathcal{S}|=2^{|\mathcal{B}|}$ be the set of all possible states. 
With each state $s$ the interference changes and so the transmission efficiency does, as expressed in the following definition. 

\begin{definition}[Transmission efficiency in state $s$]
We denote by $\zeta_c^s$ the transmission efficiency of group $c$ in state $s$, which can be computed with \eqref{eq:zetaformula}--\eqref{eq:CDF} by ignoring the contributions due to base stations not active in state $s$. The transmission efficiency of a group under the coverage of an inactive base station is set to $0$.
\end{definition}

% for given group coordinates $(x,y)$ as well as the average spatial transmission efficiency $\chi_\mathcal{C}|_s$. 
%\st{Of course, $\zeta^s_c=0$ when group $c$'s base station is prevented from transmitting in state $s$. }

With the transmission efficiency derived above, we are now ready to formulate the throughput in each possible system state.

\subsection{Instantaneous System Throughput}
\label{s:cell_throughput}
\vspace{-1mm}
%Here we analyze the overall network performance in presence of multiple groups sharing the available resources. 
%We consider a set of $N$ groups, each indicated as $\mathcal{C} \in \{1, \dots, N\}$ and comprising $U_\mathcal{C}$ users. 
%
%As mentioned in Section~\ref{ss:groups}, 
Each active base station implements a scheduler such that each group retrieves transmission opportunities proportionally to the group size, i.e., for any group $c \in \mathcal{C}$ under the coverage of an active base station $b \in \mathcal{B}_s$ at time $t$, the resources allocated are expressed as follows:
\vspace{-1mm}
\begin{equation}
\label{eq:offered_traffic}
D_c (t)= K_\text{sym} 
	\frac 
	{ U_c }
	{ \sum_{i  | {b \text{ covers groups } i \text{ and } c} \text{ at time } t} U_i }, 	
%	\vspace{-2mm}
\end{equation}
where $K_\text{sym}$ is the total available number of symbols per second at the base station $b$ serving group $c$ at time $t$. Please note that $D_c (t)$ is independent of the particular state $s$ of the ABS if the system is in saturation; otherwise, in \eqref{eq:offered_traffic} we should use the actual number of backlogged groups at time $t$ instead of the total number of groups.

With the above, the resulting instantaneous per-group throughput $\Gamma^s_c (t)$ and the corresponding aggregate system throughput $\Gamma^s (t)$ in state $s$ at time $t$ are computed with the following expressions:
\begin{align}
\label{eq:throughput}
\Gamma^s_c (t) & = D_c(t) \; \zeta^s_c(x_c,y_c);\\
\Gamma^s (t) & = \sum_{c \in \mathcal{C}} \Gamma_c^s (t). 
\end{align}

%The above expressions can be used to track throughput performance over time as ABS state and group locations changes.

%
%The corresponding instantaneous system throughput 
%in state $s$ at time $t$ 
%is 
%$\Gamma^s (t)= \sum_{c \in \mathcal{C}} \Gamma_c^s (t)$. 

\subsection{Asymptotic Performance}
\label{ss:asymptotic}

Let us now consider the impact of group mobility and ABS state on asymptotic performance, since they directly affect experienced SINR levels. The objective is to compute the mean transmission efficiency and throughput of group $c$, averaged over time. %\st{and space}.

\subsubsection{Asymptotic Transmission Efficiency}

The asymptotic transmission efficiency is defined as the following time average:
\begin{align}
\bar{\zeta_c} & = \lim_{T\rightarrow +\infty} \frac { \int_0^T \zeta_c^{s(t)}(x(t),y(t)) dt} { T },
\end{align}
where ABS state $s$ and group's location can change over time. Assuming that the system is ergodic, the above quantity is equivalent to the stochastic average of a random process $\zeta_c^{V}$ (note that we use $V$ to refer to a random process that represents the ABS state):
\begin{align}
\bar{\zeta_c} & = E[\zeta_c^V] = \sum_{s \in \mathcal{S}} E[\zeta_c^s] P^s,
\label{eq:avg_eff}
\end{align}
where we have used the total probability formula and defined $P^s \!=\! \text{Pr}(V=s)$. The above relation unveils that the asymptotic transmission efficiency can be expressed in terms of per-state transmission efficiency. Most importantly, it conveys that the asymptotic transmission efficiency is affected by the probability of using a particular ABS state, and the following fundamental result holds: 

\begin{proposition}
The asymptotic transmission efficiency $\bar{\zeta_c}$, for a given user mobility model and for a fixed topology of base stations is solely affected by two components that can be tuned independently: ABS state probabilities and mmD2D group composition. 
\label{pr:eff}
\end{proposition}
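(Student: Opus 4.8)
The plan is to read the claim directly off the decomposition already established in \eqref{eq:avg_eff}, namely $\bar{\zeta_c} = \sum_{s \in \mathcal{S}} E[\zeta_c^s]\, P^s$, and to show that the two factors in each summand depend on disjoint sets of design quantities. First I would isolate the dependence of $E[\zeta_c^s]$. For a fixed state $s$, the per-state efficiency $\zeta_c^s$ is obtained from \eqref{eq:zetaformula} with the CDF $F_c$ given by \eqref{eq:Fc_exact} (or its approximation \eqref{eq:Fc_center}), where each factor is the SINR CDF \eqref{eq:CDF} evaluated with the interference set restricted to $\mathcal{B}_s$. The only inputs to this computation are the set $\mathcal{B}_s$ of active base stations, the path-loss-based average powers (determined by the fixed base-station topology together with the users' positions), and the group membership (which fixes $U_c$ and the set of factors in the product). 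Taking the expectation $E[\cdot]$ over the user trajectories then averages the positions out according to the given mobility model. Hence, for a fixed topology and fixed mobility model, each $E[\zeta_c^s]$ is a constant that is a function of the state $s$ and of the mmD2D group composition only, and carries no dependence on the ABS state probabilities.

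Second, I would observe that $P^s = \text{Pr}(V=s)$ is, by definition, entirely a property of the stochastic ABS pattern-generation process, and is unrelated to the group composition, the base-station topology, or the mobility model. Substituting back, $\bar{\zeta_c}$ is exposed as a weighted sum $\sum_{s \in \mathcal{S}} a_s(c)\, P^s$ with $a_s(c) := E[\zeta_c^s]$, where the vector $\bigl(a_s(c)\bigr)_{s \in \mathcal{S}}$ is a function of the group composition and the vector $\bigl(P^s\bigr)_{s \in \mathcal{S}}$ is a function of the ABS probabilities. Since these two objects depend on disjoint degrees of freedom, perturbing the ABS probabilities leaves every $a_s(c)$ untouched, and reshuffling the membership of group $c$ leaves every $P^s$ untouched — which is exactly the asserted independence of the two tuning components.

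A subtlety I would address explicitly is to separate the efficiency $\zeta_c^s$ cleanly from the throughput $\Gamma_c^s$ of \eqref{eq:throughput}: although the scheduler share $D_c(t)$ in \eqref{eq:offered_traffic} couples group sizes within a cell, $\zeta_c^s$ is a purely physical-layer quantity (average bits per symbol) depending only on SINR statistics and the MCS map, so none of that scheduling coupling enters $\bar{\zeta_c}$. I would likewise confirm that no hidden dependence of $P^s$ on instantaneous positions creeps in, which it does not because the network controller generates the ABS state independently of channel realizations. The main obstacle is therefore conceptual rather than computational: enumerating \emph{all} quantities that feed \eqref{eq:avg_eff} and checking that each lands in exactly one of the two groups, so that the ``solely affected by'' part of the statement — the absence of any third independent knob — is fully justified.
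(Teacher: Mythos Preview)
Your argument is correct and rests on the same key identity as the paper's proof, namely the decomposition \eqref{eq:avg_eff}, from which you trace the dependencies of $E[\zeta_c^s]$ and $P^s$ abstractly through the definitions. The paper proceeds slightly differently: rather than arguing by enumeration of inputs, it makes the dependence of $E[\zeta_c^s]$ on position and group composition explicit by conditioning on the serving base station, obtaining $E[\zeta_c^s|b]$ as a ratio of spatial integrals over $A_b$ weighted by the center-of-gravity density $L_c(x,y)$, and then summing over $b$ to get $E[\zeta_c^s]=\int_A L_c(x,y)\,\zeta_c^s(x,y)\,dA$. That constructive step is what the paper plugs back into \eqref{eq:avg_eff} to read off the two independent knobs, and it also furnishes the closed-form expression used immediately afterwards in the corollary. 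Your dependency-tracing route is arguably cleaner for the proposition as stated and your remarks separating $\zeta_c^s$ from the scheduler share $D_c(t)$ are careful; the paper's route buys the explicit integral representation as a by-product.
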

\begin{proof}
Note that transmission efficiency in a given state can be expressed as a function of the location. Specifically, denoting the spatial distribution of group $c$'s center of gravity by $L_c(x,y)$, and the coverage area of base station $b$ by $A_b$, we have the following simple expression for the conditional average transmission efficiency of a group in state $s$ under the coverage of base station $b$:
\begin{equation}
E[\zeta_c^s|b] = \frac { \int_{A_b} L_c(x,y) \zeta_c^s(x,y) dA } { \int_{A_b} L_{c}(x,y)dA },
\label{eq:zcb}
\end{equation}
The denominator of the RHS in \eqref{eq:zcb} is the probability $p_c(b)$ that group $c$ is under the coverage of base station $b$.  
%
% is also a function of $L_c(x,y)$:
% %
% \begin{align}
% p_c(b) &= \text{Pr}(\text{$c$ covered by $b$}) = \int_{A_b} L_{c}(x,y)dA.
% \label{eq:pcb}
% \end{align}
%
Therefore, applying the total probability formula yields: 
%
% \begin{align}
%  E[\zeta_c^s] & = \sum_{b \in \mathcal{B}} E[\zeta_c^s | b ] \; p_c(b) \nonumber \\ 
%  & = \sum_{b \in \mathcal{B}} \frac { \int_{A_b} L_c(x,y) \zeta_c^s(x,y) dA } { \int_{A_b} L_{c}(x,y)dA} \cdot \int_{A_b} L_{c}(x,y)dA \nonumber\\
%  & = \int_A L_c(x,y) \zeta_c^s(x,y) dA,
% \label{eq:state_eff} 
% \end{align}
% 
\begin{align}
 E[\zeta_c^s] & = \sum_{b \in \mathcal{B}} E[\zeta_c^s | b ] \; p_c(b) = \int_A L_c(x,y) \zeta_c^s(x,y) dA,
\label{eq:state_eff} 
\end{align}
where $A \! = \! \cup_{b \in \mathcal{B}} {A_b}$ is the total covered area.
%and the conditional transmission efficiency of a group depends on its composition (besides mobility parameters). 

Using \eqref{eq:state_eff} in \eqref{eq:avg_eff} shows that the asymptotic transmission efficiency, for a given base station topology and mobility model, can be tuned by means of two independent mechanisms, namely $(i)$ adjusting ABS state probabilities $P^s$ and $(ii)$ using different group sizes, which affects $\zeta_c^s$.
\end{proof}

\begin{corollary}
When the position of a group is known in terms of spatial distribution $L_c(x,y)$, its asymptotic transmission efficiency can be computed as
\begin{align}
\bar{\zeta_c} & = \sum_{s \in \mathcal{S}} P^s \;  \int_A L_c(x,y) \zeta_c^s(x,y) dA.
\end{align}
\end{corollary}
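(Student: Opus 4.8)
The plan is to obtain the claimed identity by direct substitution of the two intermediate results already established in the proof of Proposition~\ref{pr:eff}, since the corollary is essentially the special case in which the spatial distribution $L_c(x,y)$ is taken as given. First I would recall \eqref{eq:avg_eff}, which, under the ergodicity assumption invoked there, writes the asymptotic transmission efficiency as the probability-weighted sum $\bar{\zeta_c} = \sum_{s\in\mathcal{S}} E[\zeta_c^s]\,P^s$ over the finite state space $\mathcal{S}$, with $P^s = \text{Pr}(V=s)$. The only remaining task is then to replace each per-state expectation $E[\zeta_c^s]$ by a closed-form spatial integral.

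Second, I would invoke \eqref{eq:state_eff}, which already establishes $E[\zeta_c^s] = \int_A L_c(x,y)\,\zeta_c^s(x,y)\,dA$, derived in the proof of Proposition~\ref{pr:eff} by conditioning on the serving base station via \eqref{eq:zcb} and applying the total probability formula over $\mathcal{B}$. A small point worth verifying along the way is that $L_c$ is a genuine probability density on $A = \cup_{b\in\mathcal{B}} A_b$, i.e.\ that $\int_A L_c(x,y)\,dA = \sum_{b\in\mathcal{B}} p_c(b) = 1$, so that \eqref{eq:state_eff} is indeed a bona fide expectation and the manipulation is legitimate; this follows because the groups are always under the coverage of exactly one base station.

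Finally, substituting \eqref{eq:state_eff} into \eqref{eq:avg_eff} yields $\bar{\zeta_c} = \sum_{s\in\mathcal{S}} P^s \int_A L_c(x,y)\,\zeta_c^s(x,y)\,dA$, which is precisely the statement. The only step demanding any care is the interchange of the finite sum over $s$ with the integral over $A$ (and, implicitly, with the expectation over the ABS process $V$); since $\mathcal{S}$ is finite and each integrand $L_c\,\zeta_c^s$ is nonnegative and integrable on the bounded region $A$, Tonelli's theorem applies without difficulty, so I do not anticipate a real obstacle — the corollary is a bookkeeping consequence of Proposition~\ref{pr:eff} once one specializes to a known $L_c(x,y)$.
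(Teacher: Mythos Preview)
Your proposal is correct and follows exactly the route the paper intends: the corollary is an immediate bookkeeping consequence of substituting \eqref{eq:state_eff} into \eqref{eq:avg_eff}, and the paper does not even supply a separate proof. Your extra caution about Tonelli is unnecessary here, since no interchange of sum and integral is actually performed---the displayed formula is obtained by direct substitution with the sum remaining outside and the integral inside---but it does no harm.
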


%, when a given state $s$ of the ABS is considered. %As previously said, the mobility of each group can be approximated by the mobility of its gravity center. 
%We rely on the knowledge of the asymptotic distribution of group locations and we indicate the spatial distribution of group $c$'s center of gravity over the scenario as $L_c(x,y)$, considering that the movements of the overall group can be approximated by the mobility of its center of gravity. Such distribution can be easily derived for well-known mobility models, e.g., the Random Way Point (RWP) mobility model \cite{waypoint} that applies to the gravity center's position if groups follow RPGM, and it can also be empirically retrieved if needed.
%
%Based on the spatial distribution of the group gravity center, and fixing the ABS state to a particular state $s$, the {\it conditional} average transmission efficiency achieved by a group under a particular base station $b$, namely $E[\zeta_c^s|b]$, can be computed as follows:
%
%\begin{equation}
%E[\zeta_c^s|b] = \frac { \int_{A_b} L_c(x,y) \zeta_c^s(x,y) dA } { \int_{A_b} L_{c}(x,y)dA },
%\end{equation}
%where $A_b$ is the area covered by base station $b$. 

\subsubsection{Asymptotic Average Throughput}

We can derive the asymptotic average throughput (over time) achieved by group $c$ with an approach similar to the one described above for the asymptotic transmission efficiency:
\begin{align}
\label{eq:gammac}
\bar{\Gamma_c} & 
= E[\Gamma_c]
= \sum_{s \in \mathcal{S}} E[\Gamma_c^s] P^s.
%= \sum_{s \in \mathcal{S}} P^s \int_A L_c(x,y) \Gamma_c^s(x,y) dA.
\end{align}
In the above formula, the conditional average throughput
in state $s$ can be computed as follows: 
\begin{align}
E[\Gamma_c^s] & = \sum_{b\in\mathcal{B}_s} p_c(b) \, E[\zeta_c^s|b] \, E[D_c|b] \nonumber \\
& = \sum_{b\in\mathcal{B}_s} E[D_c|b] \int_{A_b} L_c(x,y) \zeta_c^s(x,y) dA,
\label{eq:athr}
\end{align}
where $E[D_c|b]$ represents the average number of symbols allocated to group $c$ under base station $b$.
%, while $p_c(b)$ and $E[\zeta_c^s|b]$ have been defined in the proof of Proposition~\ref{pr:eff} in \eqref{eq:zcb} and \eqref{eq:pcb}, respectively.
%
%In reality, group $c$ shares the cellular resources with a number of other groups under the same base station. Therefore, we compute the average throughput $\bar{\Gamma}_c^s$ of group $c$ when the set of groups in the scenario is fixed as well as the state of the ABS is set to $s$. Denoting as $\mathcal{C}$ the set of groups in the scenario and as $L_i(x,y)$ the spatial distribution of each group $i\in \mathcal{C}$, 
%In turn, the average number of symbols per second to be used in \eqref{eq:athr}, i.e., 
$E[D_c|b]$, is obtained by considering all possible combinations of groups that fall under the coverage of base stations $b$, as follows:

\begin{equation}
E[D_c|b] =\!\!\!\! \sum_{Z\in \mathcal{P}(\mathcal{C},c)} \prod_{i\in Z} p_{i}(b) \prod_{j\notin Z} [1-p_{j}(b)] \frac{U_c}{U_c+\sum_{i\in Z} U_{i}} K_{sym},
\label{eq:avgDc}
\end{equation}
where $\mathcal{P}(\mathcal{C},c)$ is the power set of the groups in $\mathcal{C}$ when group $c$ is taken out ($Z$ is therefore a set too). The calculation of $E[D_c|b]$ assumes a proportional resource scheduling based upon group sizes, i.e., proportional to the number of users building up the mmD2D groups.
%\st{Interestingly, $E[D_c|b]$ can be also computed as the time average of resources received from group $c$ while scheduled by base station $b$, i.e.:}

%\begin{equation}
%\xcancel{E[D_c|b] = \lim_{t \to + \infty} \frac{\int_0^t D_c(t) \delta_c^b(t) ~dt}{\int_0^t \delta_c^b(t)~dt},}
%\end{equation}
%\st{where $\delta_c^b(t) = 1$ only if the position of group $c$ at time $t$ is served by base station $b$.}
%Given $E D}_c|b$ we can calculate $\bar{\Gamma}_c^s$ as the following:

%\begin{equation}
%\bar{\Gamma}_c^s = \sum_{b\in\mathcal{B}_s} p_c|b \cdot\bar{\zeta}_c^s|b \cdot \bar{D}_c|b.
%\end{equation}

Since \eqref{eq:avgDc} does not depend on ABS state probabilities, \eqref{eq:athr} behaves likewise, and the asymptotic average throughput \eqref{eq:gammac} has a similar structure as the transmission efficiency. Therefore, the above derivation directly leads to the following result, similar to what found for the transmission efficiency:
\begin{proposition}
The asymptotic throughput $\bar{\Gamma_c}$, for a given user mobility model and for a fixed topology of base stations is solely affected by two components that can be tuned independently: ABS state probabilities and mmD2D group composition.
\label{pr:thr}
\end{proposition}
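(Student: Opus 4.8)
The plan is to replay, almost verbatim, the argument that established Proposition~\ref{pr:eff}, since the asymptotic throughput \eqref{eq:gammac} has the same convex-combination structure in the ABS state probabilities as the asymptotic transmission efficiency \eqref{eq:avg_eff}. First I would insert \eqref{eq:athr} into \eqref{eq:gammac} to make the dependence explicit:
\begin{align}
\bar{\Gamma_c} = \sum_{s\in\mathcal{S}} P^s \sum_{b\in\mathcal{B}_s} E[D_c|b] \int_{A_b} L_c(x,y)\,\zeta_c^s(x,y)\,dA.
\label{eq:gammac_expanded}
\end{align}
Once the base-station topology and the mobility model are fixed, the coverage probabilities $p_c(b)$ and the spatial distribution $L_c$ are fixed as well, so in \eqref{eq:gammac_expanded} the ABS schedule influences $\bar{\Gamma_c}$ only through the scalar weights $P^s$ sitting outside the inner double sum.

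Next I would verify that the inner double sum is a function of the group composition alone. The factor $E[D_c|b]$ is given in closed form by \eqref{eq:avgDc}; it depends only on the coverage probabilities $p_i(b)$ (set by topology and mobility) and on the group sizes $U_i$, and no ABS probability $P^s$ occurs in it, so reshuffling the ABS distribution leaves it unchanged. The factor $\int_{A_b} L_c(x,y)\,\zeta_c^s(x,y)\,dA$ depends on the state $s$ through $\zeta_c^s$, but by \eqref{eq:zetaformula}--\eqref{eq:Fc_center} the only way the group composition enters $\zeta_c^s$ is via the number of members $U_c$ (and, in the exact version, their relative positions) inside $F_c$; here too no $P^s$ appears. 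Hence adjusting the $U_i$'s reshapes $\zeta_c^s$ and $E[D_c|b]$, while adjusting the $P^s$'s only re-weights the states, and neither operation is constrained by the other.

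Putting the two observations together gives the claim: $\bar{\Gamma_c}$ is steered by the two mechanisms $(i)$ the ABS state probabilities $P^s$, which enter \eqref{eq:gammac_expanded} linearly, and $(ii)$ the mmD2D group sizes, which enter through $\zeta_c^s$ and $E[D_c|b]$, and the two can be tuned independently; as in the transmission-efficiency case this also yields an explicit expression for $\bar{\Gamma_c}$ once $L_c$ is known (the analog of the corollary following Proposition~\ref{pr:eff}). The computation itself is mechanical; the one point I would be careful to state explicitly, rather than leave implicit, is that changing the group composition does not restrict the feasible set of ABS distributions — the $P^s$ still range over the whole probability simplex on $\mathcal{S}$, because the patterns are produced by the network controller regardless of how users are grouped — so that ``tuned independently'' holds literally and not merely at the level of algebraic form. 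This is the only (minor) obstacle in an otherwise routine argument.
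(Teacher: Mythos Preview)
Your proposal is correct and follows essentially the same route as the paper: the paper's argument is the one-line observation preceding the proposition, namely that \eqref{eq:avgDc} is free of ABS state probabilities, so \eqref{eq:athr} is as well, and hence \eqref{eq:gammac} has the same convex-combination structure as \eqref{eq:avg_eff}, allowing the proof of Proposition~\ref{pr:eff} to be repeated verbatim. You have simply made that repetition explicit (and added the remark that the feasible set of $P^s$ is unconstrained by the group composition, which the paper leaves tacit).
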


The distribution of resources expressed in \eqref{eq:avgDc}---and therefore the asymptotic system throughput in state $s$ expressed in \eqref{eq:athr}---is strictly dependent on the set of active groups $\mathcal{C}$ and their movements. 
%In case $\mathcal{C} = c$, $\bar{\Gamma}_c^s$ is represented as in Eq.~\eqref{eq:thr_lonely}. Otherwise dealing with 
In case of homogeneous scenarios wherein all groups experience the same spatial distribution of their gravity center, i.e., $L_i(x,y)=L_c(x,y)$, so that $p_i(b) = p_c(b)$, $\forall i \in\mathcal{C}, b \in \mathcal{B}$, and all groups show the same number of users, i.e., $U_i=U_c$, $\forall i \in\mathcal{C}$, $E[D_c|b]$ is the same for all groups and can be simplified as follows:

\begin{equation}
E[D_c|b] = \sum_{k = 0}^{|\mathcal{C}|-1} {{|\mathcal{C}|-1}\choose{k}} \frac{K_{sym}}{k+1} [p_c(b)]^k [1-p_c(b)]^{|\mathcal{C}|-k-1}.
\end{equation}

The asymptotic system throughput achieved under ABS state $s$ is simply given by the sum of group's asymptotic throughputs, i.e., $E[\Gamma^s] = \sum_{c \in \mathcal{C}} E[\Gamma_c^s]$. 
%\st{The same result would be achieved for the asymptotic  system throughput by directly integrating over time the sum of $\Gamma^s_c (t)$ values given by (4). However, as ABS states change over time, the actual total }

\begin{remark}
Note that the case of independent users not joining any group can be analyzed by regarding those users as groups of size one.
\end{remark}

\subsection{Summary of Analytic Results}

Transmission efficiency and throughput achieved by each group---and hence the fairness level experienced in the network---depend on the fraction of time spent in each ABS state and on the gain attained by means of mmD2D when relay groups form. As we have shown through this section for instantaneous and asymptotic performance figures, the two components operate orthogonally since group composition does not depend on ABS state and vice versa. Besides ABS states and groups composition, mobility plays an important role, and we have shown how to analytically derive the achievable throughput as a function of group position distributions. 

Most importantly, all derived results show that there is no dependency on how ABS state alternate over time, except the probability to observe a state has to be known to be able to predict the performance of the system. As such, ABS state probabilities can be tuned to optimize system performance.  

Indeed, in Section~\ref{s:optimization}, we will study the optimization of %\st{such fractions of time, i.e., the optimization of} 
ABS in terms of ABS state probabilities $P^s$ enforced by means of ABS patterns computed with no need to coordinate between gNBs and with simple random assignments---thus achieving extremely low complexity---for fixed mmD2D configurations and mobility parameters. Before that, we proceed by providing the reader with a concrete example that helps quantify how mmD2D and the choice of ABS state impact on throughput and fairness performance.

\section{Example of Impact of ABS and mmD2D Relay}
\label{s:example}

To evaluate the impact of ABS on a mmD2D-enabled network, we use an example based on a realistic heterogeneous dense-urban area of $400$m~$\times$~$320$m close to the Oxford Circus metro station in London city (UK). %\st{Based on its real topography, in the example presented in what follows, we} 
We consider only the base stations under the control of O2 mobile network operator, whose base stations' position and transmission power are publicly available.\footnote{All information are retrieved from the OFCOM reports available at \url{http://stakeholders.ofcom.org.uk/sitefinder/sitefinder-dataset/}
}

\ifsingle
\begin{figure} [!t]
\centering
	\includegraphics[width=0.65\textwidth]{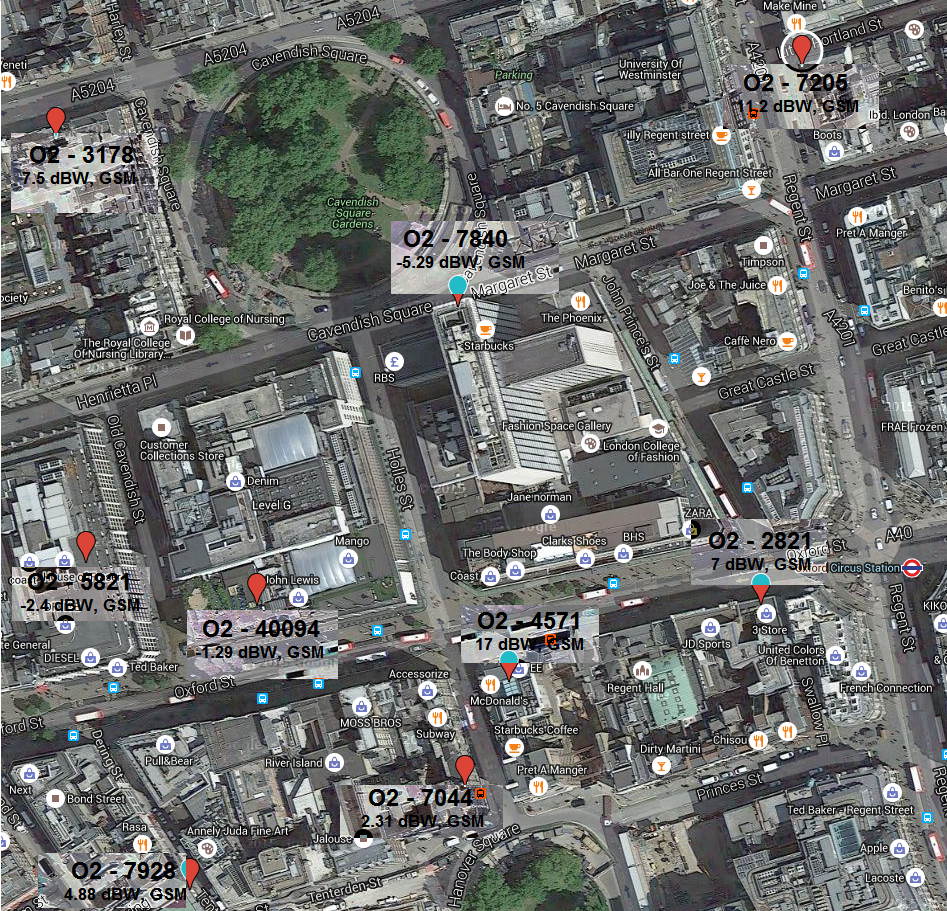}
	%\includegraphics[width=0.27\textwidth]{figs/london_picture_white.png}
%	\vspace{-3mm}
	\caption{\footnotesize O2 deployment in London city - Oxford Circus.}
	\label{fig:london_deployment}
%	\vspace{-4.5mm}
\end{figure}%
\else
\begin{figure} [!t]
\centering
	\includegraphics[width=0.3\textwidth]{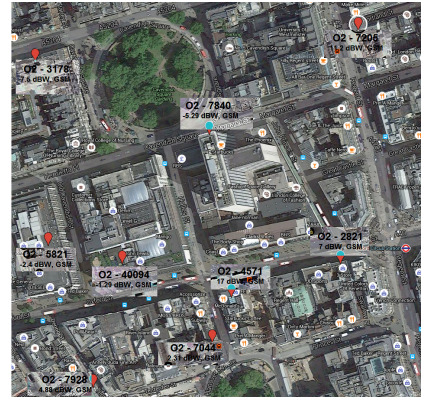}
	%\includegraphics[width=0.27\textwidth]{figs/london_picture_white.png}
%	\vspace{-3mm}
	\caption{\footnotesize O2 deployment in London city - Oxford Circus.}
	\label{fig:london_deployment}
%	\vspace{-4.5mm}
\end{figure}
\fi

%\st{In the considered area, $9$ base stations are present, as} 
As illustrated in Fig.~\ref{fig:london_deployment}, there are $9$ base stations irregularly spaced and with heterogeneous transmission powers. %\st{which also reports the heterogeneous transmission powers of the base stations. Accordingly, we plot in} 
Fig.~\ref{fig:heatmaps} reports a snapshot of the group throughputs 
%transmission efficiency $\zeta^s_c(x,y)$ 
obtained in three different ABS states for 
%for a network area of $140 \times 140$ m$^2$ with $7$ base stations equally spaced apart $50$m and using $20$ dBm as transmission power. 
a static allocation of $200$ groups, whose centers of gravity are reported as white dots in the figure. 
Fig.~\ref{fig:heatmaps} shows the average of the group throughputs over the entire network (top of each subfigure) and over the area of each cell (indicated next to the cell center) when group sizes are $U_c = 1$, i.e., without mmD2D relay, and $U_c = 5$ for all groups, respectively. 

As shown in Figs.~\ref{fig:heatmap_all} and \ref{fig:heatmap_all5}, with all base stations active (i.e., as in the case no ABS was enforced) the distribution of group throughput is unfair and low in a realistic deployment, with especially large cells guaranteeing poor throughput. The effect of relay via mmD2D sidelinks is notable and results in a $65\%$ increase in the average of group throughput over the entire network, though not all cells experience the same degree of benefit.  

Figs.~\ref{fig:heatmap_corners} and \ref{fig:heatmap_corners5} show throughput achieved by keeping active the three strongest base stations only. In this ABS state, most of the users are under coverage of active base stations, but the interference level remains high. Only the central base station experiences a significant gain (it doubles the throughput for the groups it serves, with and without mmD2D relay), since many interfering small cells around it are blanked. Here, the impact of mmD2D is less important but still high ($+51\%$ on the group throughput averaged over the entire network). Moreover, group throughput in this state is not only unfairly distributed, but also much lower than in the case with all base stations active. In our experiments, we have observed a similar behavior in many cases, which questions the ability of ABS to improve throughput. 

\ifsingle
\begin{figure} [t!]
\centering
	\subfigure[All base stations active, 
	without mmD2D ($U_{\mathcal{C}}=1$).]
	{
	\includegraphics[scale=0.30]{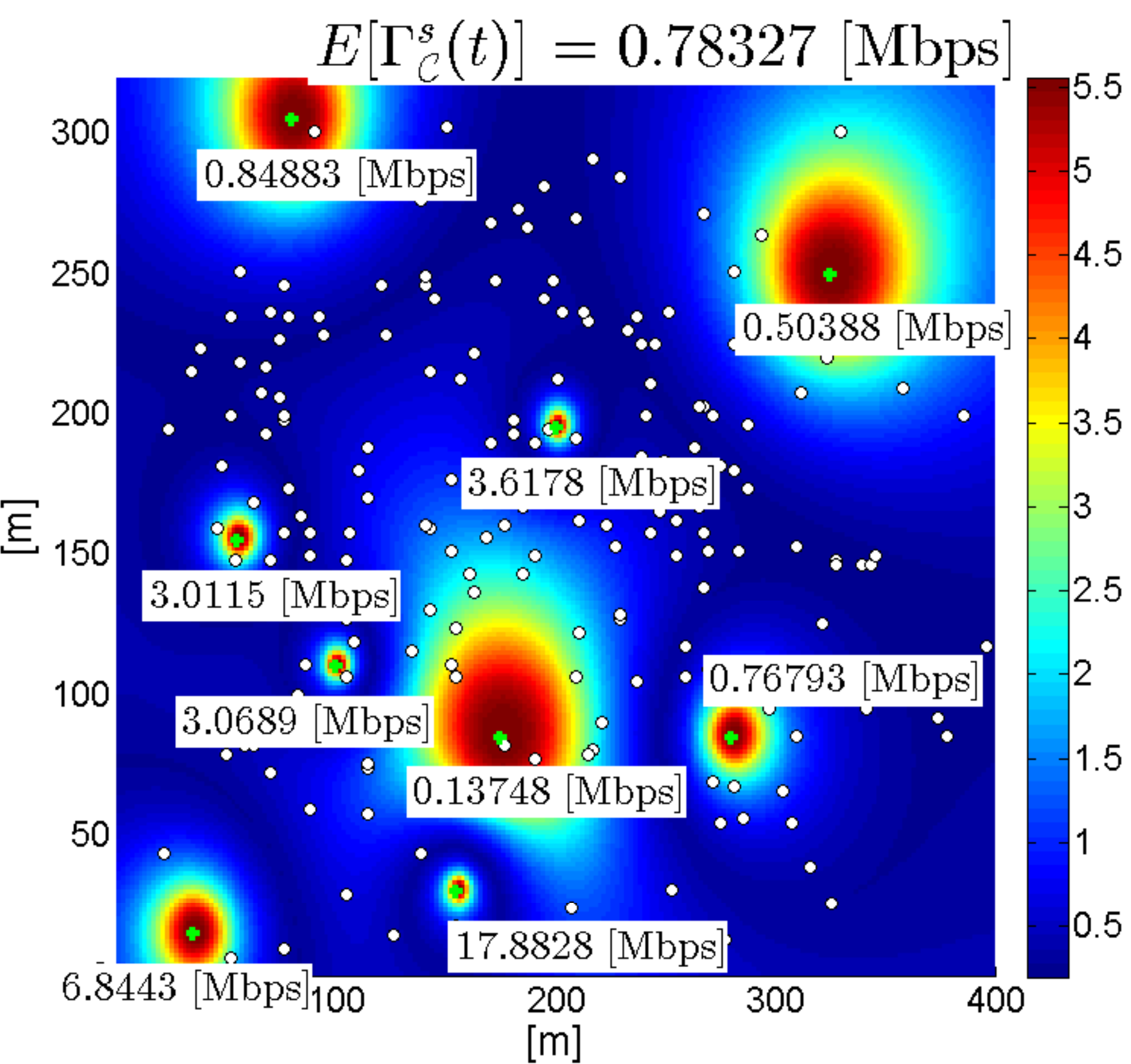}	
	\label{fig:heatmap_all}
	}
	\hspace{2cm}
	\subfigure[All base stations active, $U_{\mathcal{C}}=5$.]
	{
	\includegraphics[scale=0.30]{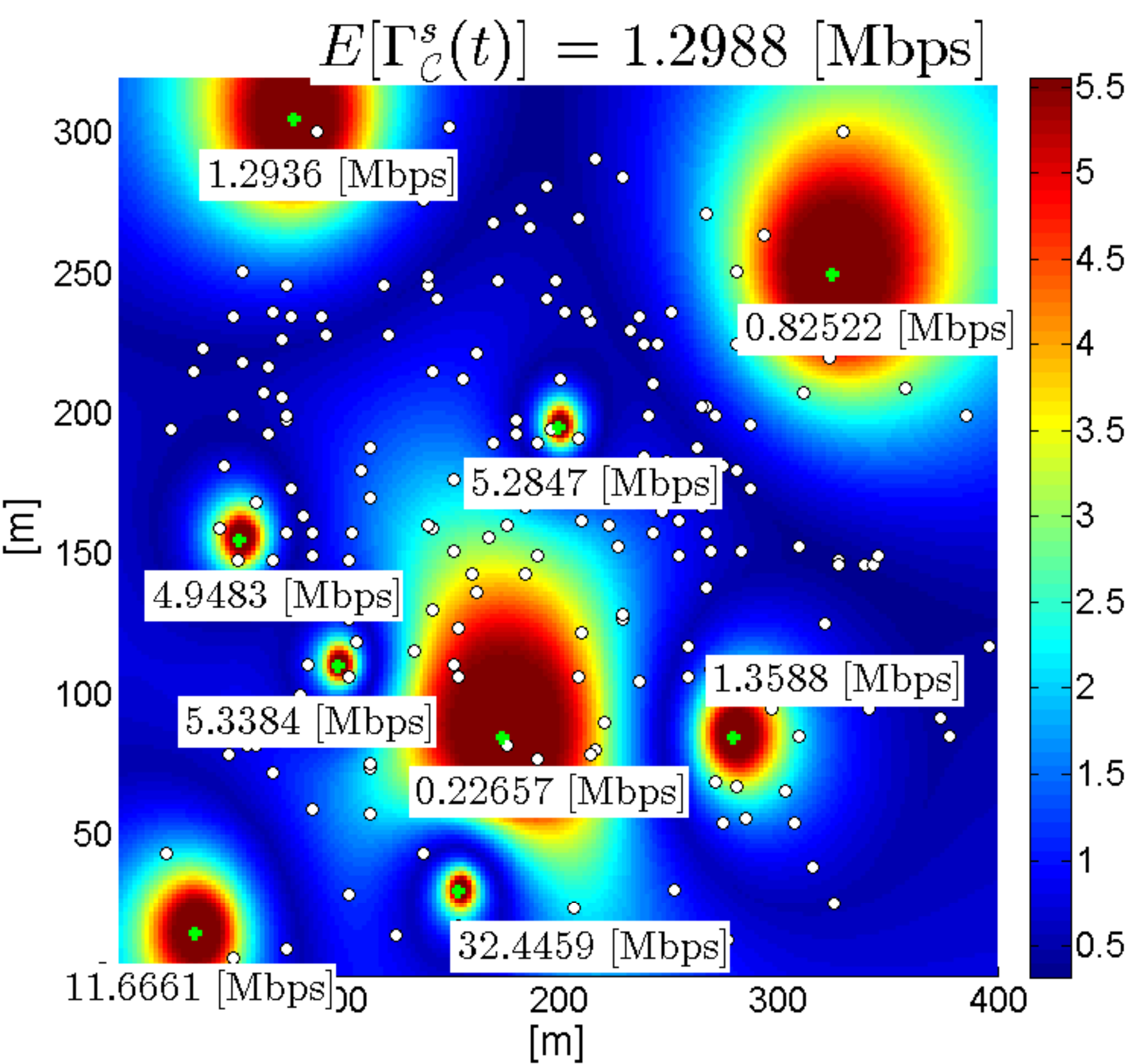}
	\label{fig:heatmap_all5}
	}
	\subfigure[Strongest base stations only, without mmD2D ($U_{\mathcal{C}}=1$).]
	{
	\includegraphics[scale=0.30]{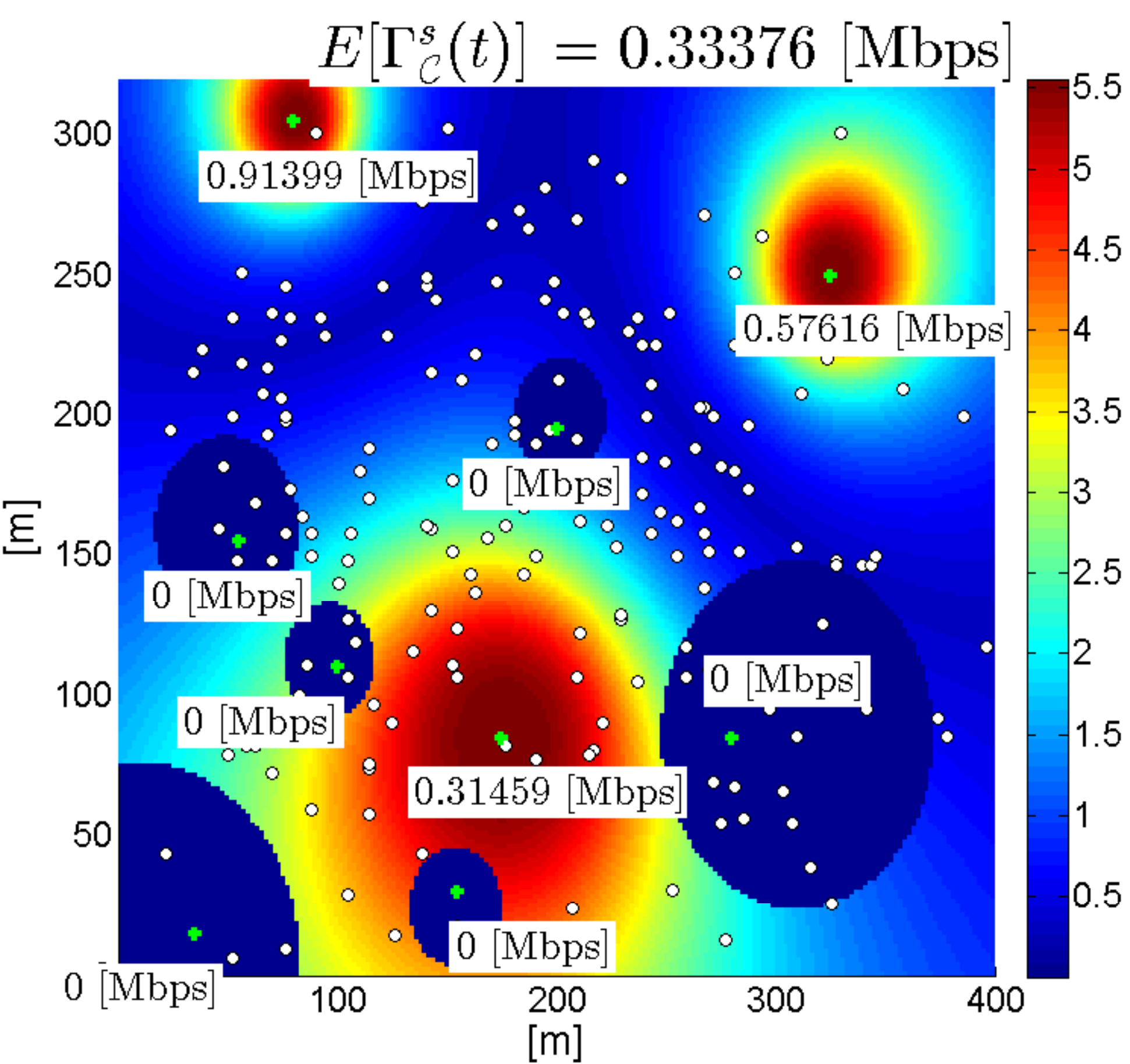}
	\label{fig:heatmap_corners}
	}
	\hspace{2cm}
	\subfigure[Strongest base stations only, $U_{\mathcal{C}}=5$.]
	{
	\includegraphics[scale=0.30]{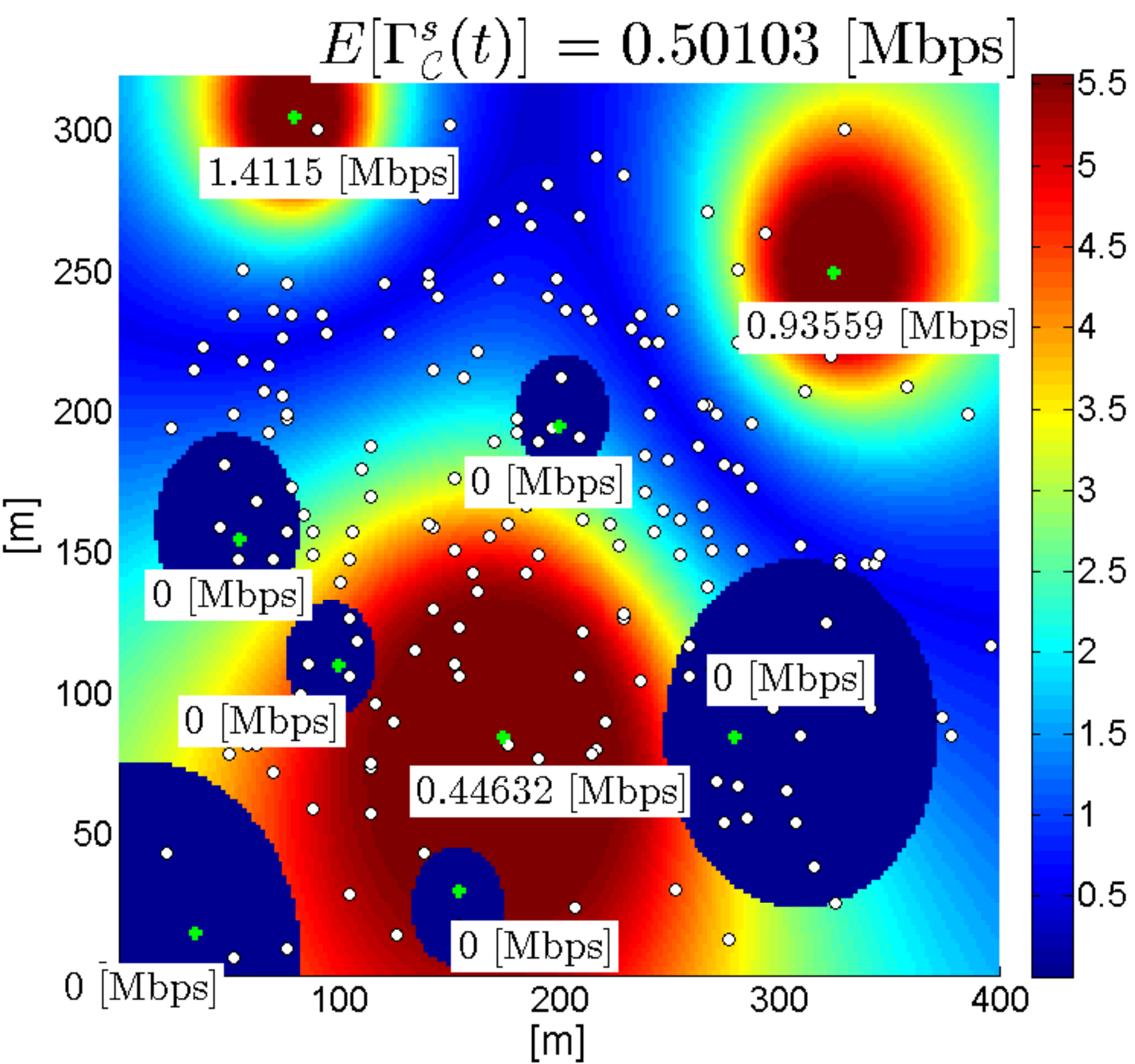}
	\label{fig:heatmap_corners5}
	}
	\subfigure[Blanking the central base
	station, without mmD2D ($U_{\mathcal{C}}=1$).]
	{
	\includegraphics[scale=0.30]{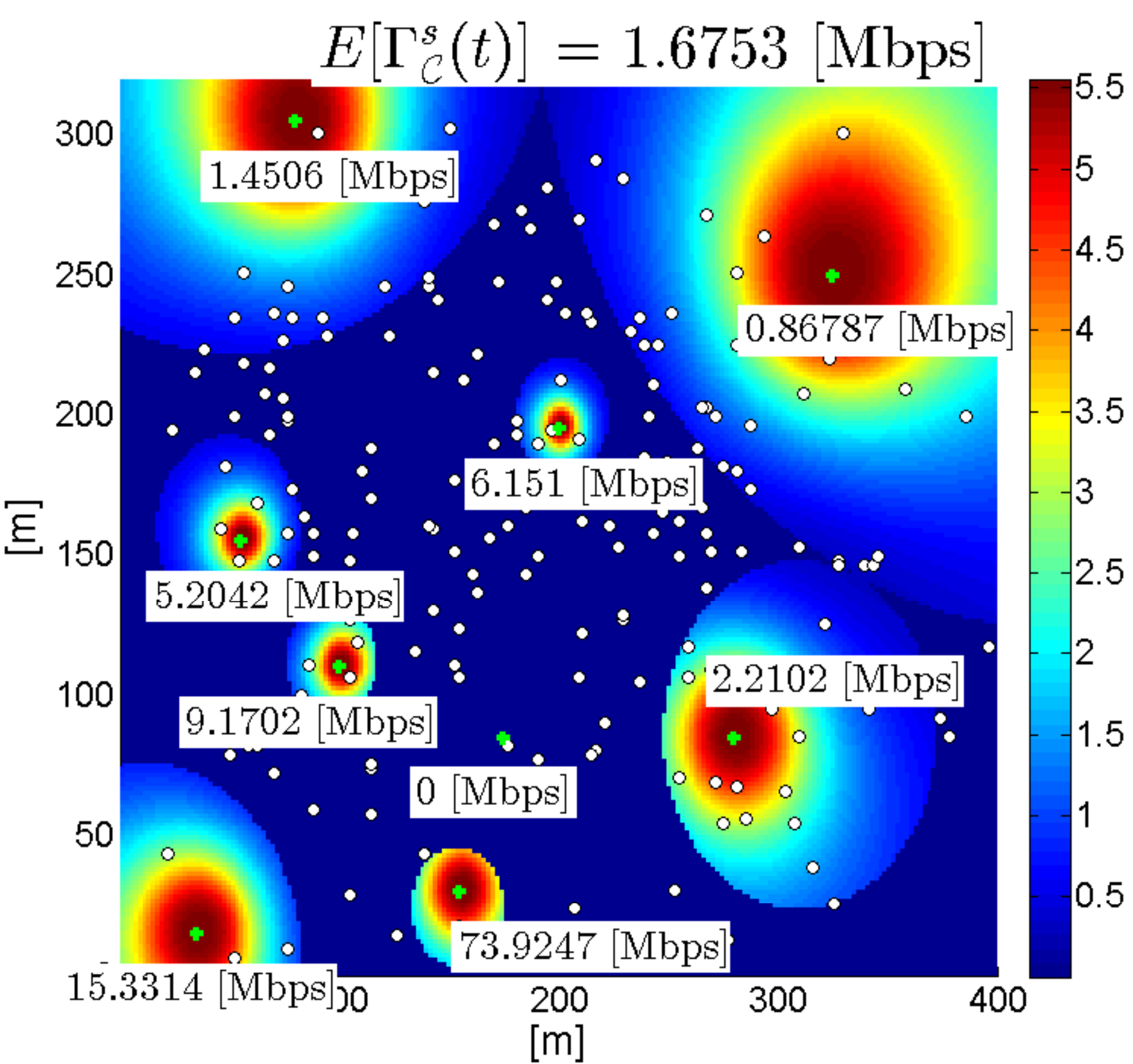}
	\label{fig:heatmap_center}
	}
	\hspace{2cm}
	\subfigure[Blanking the central base
	station, $U_{\mathcal{C}}=5$.]
	{
	\includegraphics[scale=0.30]{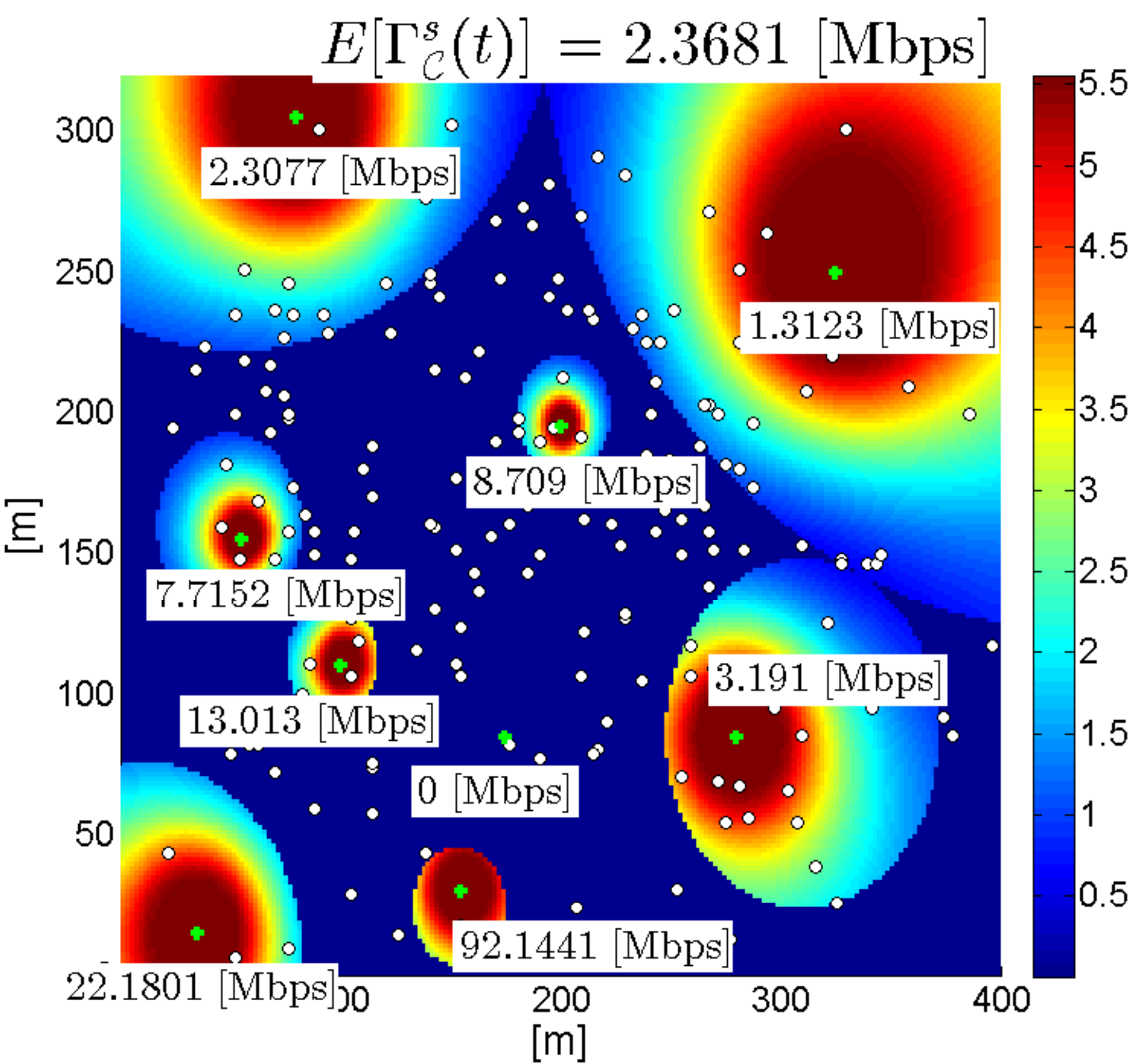}
	\label{fig:heatmap_center5}
	}
	\caption{\footnotesize
	Example of throughputs achievable in a realistic network deployment. Figure best viewed in colors.}
	\label{fig:heatmaps}
\end{figure}
\else
\begin{figure} [t!]
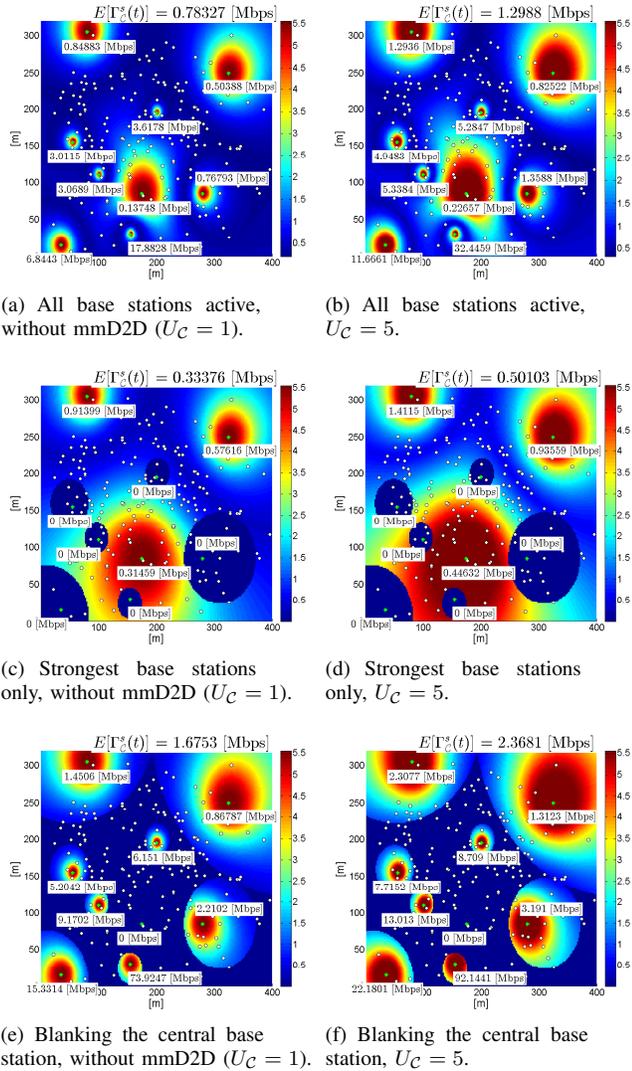

\centering
	\subfigure[All base stations active, $\qquad$ 
	without mmD2D ($U_{\mathcal{C}}=1$).]
	{
	\includegraphics[scale=0.20]{./figs/fig_2_london_all}	
	\label{fig:heatmap_all}
	}
	\subfigure[All base stations active, $\qquad$ $U_{\mathcal{C}}=5$.]
	{
	\includegraphics[scale=0.20]{./figs/fig_2_london_all_5}
	\label{fig:heatmap_all5}
	}
	\subfigure[Strongest base stations $\qquad$ only, without mmD2D ($U_{\mathcal{C}}=1$).]
	{
	\includegraphics[scale=0.20]{./figs/fig_2_london_onlymacro}
	\label{fig:heatmap_corners}
	}
	\subfigure[Strongest base stations $\qquad$ only, $U_{\mathcal{C}}=5$.]
	{
	\includegraphics[scale=0.20]{./figs/fig_2_london_onlymacro_5}
	\label{fig:heatmap_corners5}
	}
	\subfigure[Blanking the central base $\qquad$
	station, without mmD2D ($U_{\mathcal{C}}=1$).]
	{
	\includegraphics[scale=0.20]{./figs/fig_2_london_exceptcentral}
	\label{fig:heatmap_center}
	}
	\subfigure[Blanking the central base $\qquad$
	station, $U_{\mathcal{C}}=5$.]
	{
	\includegraphics[scale=0.20]{./figs/fig_2_london_exceptcentral_5}
	\label{fig:heatmap_center5}
	}
%vspace{-2mm}
	\caption{\footnotesize
	Example of throughputs achievable in a realistic network deployment. Figure best viewed in colors.}
	\label{fig:heatmaps}
%\vspace{-1.5em}
\end{figure}
\fi

Finally, Figs.~\ref{fig:heatmap_center} and \ref{fig:heatmap_center5} illustrate how blanking the central (and strongest) cell yields much higher throughput than in the other considered cases. Therefore, this state is convenient to boost throughput. However, once again, one should notice that group throughput is unfairly distributed, with and without mmD2D, and that mmD2D with groups of $5$ users brings a significant $41\%$ gain on the group throughput averaged over the entire network. 

In general, the distribution of throughput over groups is very much unfair in any ABS state, which justifies the effort to introduce a mechanism to enforce fairness.
%
%scheduling only one base station considerably improves the transmission efficiency of the group at the covered locations. However, the gain is not proportional to the number of blanked cells, as evident from a comparison with the behavior in other states (cf. Figs.~\ref{fig:heatmap_corners}, ~\ref{fig:heatmap_corners5}, ~\ref{fig:heatmap_all} and~\ref{fig:heatmap_all5}).
%
%Fig.~\ref{fig:heatmap_corners} and Fig.~\ref{fig:heatmap_corners5} represent the case in which three outer cells are scheduled by the ABS application. Such state exhibits very high transmission efficiency within the coverage of the active base stations, and much higher aggregate throughput with respect to the case with a single active cell. This is true both with and without D2D, although a cooperative D2D approach clearly outperforms the case with no D2D. 
%Without ABS (as in Fig.~\ref{fig:heatmap_all} and Fig.~\ref{fig:heatmap_all5}), the system can still achieve outstanding results in terms of aggregate throughput, with and without D2D relay, although with $U_c=1$ the case with only $3$ active base stations yields more aggregate throughput than without ABS. {\it Most importantly, the distribution of throughputs over cells is very much unfair in any ABS state}, which justifies the effort to introduce a mechanism to enforce fairness.
%
%
The example of Fig.~\ref{fig:heatmaps} also shows that opportunistic mmD2D relay not only boost throughput, but also attenuates the difference achieved under various ABS states, which means that, with mmD2D relay, ABS can be more likely used to pursue other goals rather than simply transmission efficiency. For instance, since ABS can be seen as a mechanism that schedules base station activity, it is natural to think of ABS as of a tool for enforcing fairness by alternating system states conveniently. 

The above considerations motivate the problem formulation that will be formally presented in 
Section~\ref{s:optimization} 
%Section~\ref{s:cell_system_analysis}
in terms of optimal probabilities to select ABS states given user distribution and given the fact that users help each other with mmD2D relay. 
%Before proceeding with the formulation, in the next subsection we define the basic analytical tools to evaluate cellular performances in terms of resource utilization and system throughput under the effect of ABS, which are key to formulate the optimization problem.

\section{Proportional Fairness Optimization}
\label{s:optimization}

% Thanks to such analysis, we compute the ABS pattern which maximizes the (long term) system throughput under a vast set of heterogeneous conditions. Finally, we design an easy-to-deploy stochastic ABS mechanism that jointly achieves (short terms) high transmission efficiency as well as maximal user fairness according to the classical concept of {\it proportional fairness}.

%As we have shown in Section~\ref{s:analysis}, in a real network, both the asymptotic throughput and the instantaneous throughput achieved by the groups change due to two factors: $(i)$ the position of the users, and $(ii)$ the ABS state wherein the system lays. 
%
% With the above we have analytically characterized the {\it average} performance of a D2D-enabled  system in any possible ABS state. The analysis shows that ABS strategy should be adapted to changes in user spatial distributions, and that the main objective of ABS should be not only increasing aggregate system throughput, but also improve a throughput fairness metric, e.g., proportional fairness. Moreover, in practical and realistic scenarios, not only asymptotic spatial distributions of groups are not known, but also channel conditions can change and be not stationary, thus affecting the optimal strategy to select ABS patterns.
%

Here we first characterize the stochastic ABS pattern that achieves asymptotic maximal user fairness according to the classical concept of {\it proportional fairness}, which holds under a vast set of heterogeneous conditions. Under the assumption of ultra-dense scenarios, such an asymptotic analysis also approximates well the normal behavior of the network, since density conditions do not change over time. 

Afterwards, for highly dynamic scenarios wherein user density can fluctuate over time and the ultra-dense assumption cannot be used, we design an easy-to-deploy stochastic ABS mechanism that jointly achieves high transmission efficiency as well as maximal user fairness, closely following the variations of channel qualities and group locations in the system.

Before proceeding, note that, both with asymptotic and dynamic optimization, implementing the resulting stochastic ABS patterns has a twofold advantage: $(i)$ random patterns do not incur systematic discretization issues that might arise with deterministic allocations of states, and $(ii)$ they make it possible to generate distinct patterns for distinct base stations independently, thus reducing the complexity of {\it network controllers} issuing the patterns. 
Note that the formation of groups could be optimized as well, as it emerges from the analysis. However, it is left out of this work since it would require a more comprehensive study on user behavior, trustworthiness, costs and incentives, and other aspects that deserve a stand-alone project.

\subsection{Asymptotic Proportional Fairness Optimization}
 %Given the ABS state $s$, we have shown in Section~\ref{ss:asymptotic} how to compute the asymptotic throughput achieved by each group in the system, even in heterogeneous conditions. $E[\Gamma_c^s]$ is indeed the number of bits per second transmitted to group $c$, taking into account the particular spatial distribution of the group and all the boundary conditions. 
 The ABS mechanism allows switching among different states, and a group $c$ will obtain  average throughput $E[\Gamma_c^s]$ with probability $P^s$. 
 %so to achieve as performance the average of what achieved over the selected states. 
%Specifically, denoting by $P^s$ the fraction of time during which state $s$ is enforced, the resulting asymptotic throughput $E[\Theta_c]$ of group $c$ is simply given by 
% %
% \begin{equation}
% \vspace{-2mm}
% \label{eq:avg_throughput_asym}
% E[\Theta_c] = \sum_{s  \in \mathcal{S}} P^s E[\Gamma^s_c]. 
% \end{equation}
%
% $P^s$ can be re-interpreted as the probability mass distribution of ABS states, and 
% {\it the analysis unveils that performance is strongly affected by the set of ABS states used and their probabilities rather than by the order in which such states are visited}.
% %what affects performance is the set of ABS states used, not the order in which such states are visited. 
Based on such insight, we stochastically approach the ABS selection problem, i.e., we assign probabilities to select ABS states, and use such probabilities to generate sequences ({\it patterns}) of ABS states randomly. Hence, proportional fairness in terms of throughput is achieved by optimizing ABS state probabilities.
%
% The optimization problem that we formulate in order to select such ABS state probabilities goes as follows:
%
\begin{align}
\label{pr:be-distr_asympt}
& \textbf{Problem } \texttt{Asymptotic ABS-PF}: \nonumber\\
& \text{Select $P^s, \forall s\in\mathcal{S}$, so to:} \nonumber\\ 
& \begin{array}{rl}
\text{maximize }  & \widehat{\eta}_a =\sum\limits_{c \in \mathcal{C}} w_c \; \log \left( \sum\limits_{s \in \mathcal{S}} P^s E[\Gamma_c^s] \right); \\
\text{subject to: } & \sum\limits_{s\in\mathcal{S}} P^s = 1,\vspace{1mm}\vspace{-1mm}\\
		         & P^s \in [0,1], \quad \forall s \in \mathcal{S};
\end{array}
\end{align}%
where 
%State probabilities $P_s$ are the decision variables of the problem and affect the utility function through~\eqref{eq:Theta}, and their sum has to be $1$. As for 
weights $w_c$ are used to tune the group fairness.
%normalized to the number of users attached to the same base station. This guarantees fairness between users of different cluster sizes, as a weighted policy is applied. 
Since the argument of the $\log$ function in the maximization is linear with the decision variables $P^s$, the problem is convex and admits a global optimum that can be found with any off-the-shelf solver. Moreover, by linearizing the problem (e.g., the $\log$ can be approximated by a polygonal chain), the optimum can be found in polynomial time. 

Once the probabilities $P^s$ are computed, the node running the optimization (i.e., a {\it network controller}) stochastically builds and distribute a stochastically optimal ABS pattern by choosing, for each subframe composing the ABS pattern, a state at random according to optimal probabilities. The assigned ABS pattern is then repeated indefinitely at the gNB.

\subsection{Dynamic Proportional Fairness Optimization}
\label{s:dynamic_opt}
We next focus on a highly dynamic evolving system and we provide a mechanism to optimize ABS patterns to achieve proportional fairness over time, accounting for network dynamics as they are observed.  

To formulate our proportional fairness optimization problem, let us consider that, in short intervals of time of duration $T$, in which mobility effects are negligible, the throughput achievable in each state $s$ by each group does not change and can be indicated as the group throughput computed at any point in time within that interval. Thus, considering a time-slotted optimization framework starting at time $t_0$, composed by intervals $I_n~\triangleq~[t_0+nT; t_0+(n+1)T), \; n \ge 0$,  we can denote the throughput as $E\left[\Gamma_c^s(I_n)\right] = E\left[\Gamma_c^s (t)\right]$ computed with the group positions evaluated at any $t$ chosen in $I_n$.
% as in~\eqref{eq:throughput}, though the meaning of $t$ in this case is slightly different. 
Also in this dynamic version of the ABS optimization, during intervals $I_n$, it is however possible to chose subsequently different ABS states, so to achieve as performance the average of what achieved over the selected states. We denote with $P^s(I_n)$ the fraction of time during which state $s$ is enforced in interval $I_n$, the resulting throughput 
%$\Theta_c$ 
of group $c$ in interval $I_n$ is: 
\begin{equation}
\vspace{-2mm}
\label{eq:avg_throughput}
%\Theta_c (I_n) 
E \left[\Gamma_c (I_n) \right] 
= \sum_{s  \in \mathcal{S}} P^s(I_n) E \left [ \Gamma^s_c(I_n) \right ] . 
\end{equation}

% and choosing the correct sequence of ABS states.
%propose a {\it stochastic} approach to properly choose ABS patterns formulation in which ABS states are chosen according to a probability distribution that optimizes proportional fairness for the throughput achieved by the users. 
%A sequence of ABS states defines the ABS pattern issued to each base station within the intervals $I_n, n \ge 0$, where Eq.~\eqref{eq:avg_throughput} still holds with $P^s(I_n)$ re-interpreted as the probability mass distribution of ABS states.   
%The actual ABS states selected (and their order) form the ABS patterns that the base stations have to execute within intervals $I_n, n \ge 0$, and Eq.~\eqref{eq:avg_throughput} holds with $P^s(I_n)$ re-interpreted as the probability mass distribution of ABS states.   
%
%In practical and realistic scenarios, not only asymptotic spatial distributions of groups are not known, but also channel conditions can change, affecting the ABS pattern selection procedure. 
Clearly, the order in which ABS states are visited is not important and the computation of such ABS state probabilities must be repeated every interval $I_n$, due to network dynamics. The choice for the duration $T$ of such interval is pivotal for system performance: a {\it short} interval allows to consider the network as static, while a {\it long} interval accounts for including several ABS states, which in turn increases the accuracy resulting from implementing optimal probabilities with a finite-length ABS pattern.
%, so that the frequency of state occurrences in $T$ will actually converge to the chosen probability.
%whose duration $T$ has to be {\it short enough} to allow to consider the network as static and {\it long enough} to include several ABS states, so that the frequency of state occurrences in $T$ will actually converge to the chosen probability. 
For instance, the probabilities could be chosen once per second, in line with normal ABS decision-making procedures, which involves patterns of tens or hundreds of states wherein each state lasts at least 1 ms. 
%
%Note that implementing stochastic ABS patterns has a twofold advantage: $(i)$ random patterns do not incur systematic discretization issues that might arise with deterministic allocations of states, and $(ii)$ they make it possible to generate distinct patterns for distinct base stations independently, thus reducing the complexity of {\it network controllers} issuing the patterns.

The optimization problem that we formulate in order to select such ABS state probabilities is based on a long-term proportional fairness metric, in which the throughput is observed over a period of $p$ past intervals and predicted for the next interval $I_n$ (of course, the decision made at each point in time only affects $I_n$). 
To achieve so, the optimization is repeated at the begin of each interval $I_n$, and we define a utility function $\widehat\eta$ based on the $\log$ of group throughputs (to introduce proportional fairness) computed over $p+1$ intervals:
\vspace{-2mm}
\begin{align}
\label{pr:be-distr}
& \textbf{Problem } \texttt{Dynamic ABS-PF}: \nonumber\\
& \text{At time $t=t_0+nT$, select $P^s(I_n), \forall s\in\mathcal{S}$, so to:} \nonumber\\ 
& \begin{array}{rl}
\text{maximize }  & \widehat{\eta} =\sum\limits_{c \in \mathcal{C}} w_c \; \log \left( \sum\limits_{k=n-p}^n \alpha_{n-k} E\left[\Gamma_c(I_k) \right] \right); \\
\text{subject to: } & \sum\limits_{s\in\mathcal{S}} P^s(I_n) = 1,\vspace{1mm}\vspace{-1mm}\\
		         & P^s(I_n) \in [0,1], \quad \forall s \in \mathcal{S};
\end{array}
\end{align}%
where 
%State probabilities $P_s$ are the decision variables of the problem and affect the utility function through~\eqref{eq:Theta}, and their sum has to be $1$. As for 
weights $w_c$ are used to tune the group fairness target and coefficients $\alpha_k$ define how past samples of throughput affect future decisions.
%normalized to the number of users attached to the same base station. This guarantees fairness between users of different cluster sizes, as a weighted policy is applied. 
Since $E\left[\Gamma_c (I_n)\right]$---which is the only unknown term in the sum inside the $\log$ argument (because past values have been observed)---is linear in the decision variables $P^s$, also the above-defined dynamic version of the optimization problem is convex, admits a global maximum and can be easily linearized and solved in polynomial time.

Every time probabilities $P^s(I_n)$ are computed, the network controller stochastically builds and distributes a new ABS pattern by choosing, for each subframe composing the ABS pattern, a state at random according to new optimal probabilities. Such ABS pattern is valid until a new pattern is issued. 

%Note that the optimization problem ABS-PF jointly manages ABS and intra-group transmissions. Indeed, the transmission efficiency achieved in each ABS state affects the ABS pattern deployed by the base station (via the optimization ABS-PF and the analysis proposed in Sec.\ref{transmission_efficiency}), while, in turn, the ABS pattern affects the average transmission efficiency achieved by the groups. 

\subsection{Remarks on the Stochastic Optimization of ABS Patterns}

In both Problems~\texttt{Asymptotic ABS-PF} and \texttt{Dynamic ABS-PF}, weights $w_c$ can be selected based on the desired fairness target. E.g., 
%if all weights are the same, the optimization will try to equalize group throughputs as much as possible. Instead, 
for targeting equal throughput on a {\it per-user} basis, given that the group throughput is equally shared by group members, $w_c$ can be set as the number of users forming group $c$, so that group throughputs will be as much as possible proportional to group sizes. 

Coefficients $\alpha_k$ in Problem~\texttt{Dynamic ABS-PF} can be taken as a non-decreasing sequence of non-negative weights, so that past values of the throughput receive less or equal importance with respect to the prevision for next interval $I_n$. For example, exponentially decaying coefficients or constant coefficients represent simple and widely adopted solutions for this kind of digital filtering problems.  

%Furthermore, since the values of $\Theta_c (I_k), k<n,$ are known at optimization time for all groups, 
%the argument of the $\log$ function in $\widehat\eta$ consists of a constant term representing the past, and the throughput foreseen for interval $I_n$, which is linear with respect to $P^s(I_n)$ values. Since the log can be linearized through discretization, and its argument is linear, Problem~\texttt{ABS-PF} can be thereby straightforwardly linearized as well, so any off-the-shelf solver is suitable to find the optimal values for $P^s(I_n)$. 
%Once the probabilities $P^s(I_n)$ are computed, the node running the optimization (i.e., a network controller), can stochastically build ABS patterns to be distributed to the base stations by choosing at random, for each subframe composing the ABS pattern, an ABS state according to the optimal probabilities.

Interestingly, Problem~\texttt{Dynamic ABS-PF} is simple to adapt also to cover the case in which the traffic of groups is not saturated. In such a case, the time window $(p+1)T$ has to be smaller than the interval during which the set of receivers changes. In fact, during such interval all active receivers can be considered as saturated and the presented analysis holds.

\section{Practical Details}
\label{s:implementation}

Before proceeding with the numerical assessment of our proposal, here we comment on a few practical details that have to be tackled in order to implement asymptotic and dynamic optimization of ABS patterns.

\subsection{Groups}
\label{ss:groups}
The formation and presence of groups of users leaning toward cooperation is key for the success of opportunistic relay approaches. Groups might form using services like Google Nearby, which is an Android feature to discover D2D peers and request connection~\cite{CCB18}. Other groups might form by static user configuration, e.g., by pre-authorizing communication between devices belonging to the same owner (like it happens for wireless mice and hands-free speakers). 

In any case, the group has an effective role in the system only if  relay opportunities last a sufficiently large period of time, so that the group setup overhead can be neglected. 

% Therefore, we assume to exploit the groups which are intrinsically present in our access network and, as a consequence, how to form groups is out of the scope of this contribution. In order to exemplify our assumption, in the following we use the Reference Point Group Mobility (RPGM) model ~\cite{Mobility_Gerla} for our analysis. We define a center of gravity for the group, which imposes group motion parameters, such as speed or direction. 

\subsection{Resource Allocation with Groups}
%To pursue transmission efficiency and system throughput, users are allowed to use a group member as relay node to directly interact with the reference base station. 
Groups are served by the gNB like they were normal UEs. For the sake of fairness, the gNB should use a weighted round robin policy and assume the number of users within the group as the scheduling weight. Therefore, the average number of cellular resources allotted to each user would remain constant considering all possible group configurations. 

\subsection {Relay Node}
%With the ultra-high capacity of eD2D, the system throughput bottleneck is represented by the RAN domain. %cellular transmission rate only. 
%Therefore, the more efficient cellular transmissions are, the higher the system throughput grows. 
%Thereby it is key not only to 
Selecting relay nodes opportunistically and switching them swiftly, as soon as channel conditions vary, is a key-enabler for mmD2D.  We assume that the relay node is selected by the gNB on a per-packet basis, leveraging CSI reports, so that cellular transmissions always occur on the strongest cellular channel. A similar approach has been suggested and experimentally validated in~\cite{AMG16} for WiFi Direct.
%over the 2.4-GHz ISM band.
Please note that continuous re-election of relay nodes has no practical drawbacks on end-to-end latency due to the huge available bandwidth. %from the point of view of end-to-end traffic flows management. In fact, 
%as relay decisions are centralized taken %on a per-packet basis 
%by the base station (who knows which group member has the better channel), after which, relay nodes serve the aggregate group data demand via the MAC of WiGig, thus experiencing no practical delay thanks to its huge available bandwidth. 
%\vs{DO WE NEED THE FOLLOWING? I WOULD SHORTEN THIS SUBSECTION} The described relay scheme operates at PHY and MAC level, so that, in our system, data forwarding is transparent to protocols such as TCP or UDP.

\subsection{ABS {Pattern Generation}}

%ABS is a 3GPP time-domain scheduling scheme to prevent macro base stations from transmitting data in a particular set of subframes. Blanking decisions are taken based on interference suffered in the system, through a binary pattern that specifies whether a subframe has to be blanked or not ({\it ABSF pattern}). 
3GPP standard guidelines allow to implement the ABS scheme in a conventional cellular system without imposing any constraint on the specific set of subframes to blank. Standard specifications describe an ABS {\it application ratio} defined as the number of used subframes over the total number of subframes within a pattern. Once this fixed ratio is imposed by a {\it network controller}, base stations may make random choices to select the specific pattern of subframes to be blanked.

\section{Performance Evaluation}
\label{s:eval}

In this section, we evaluate our proposals---hereafter indicated as \texttt{Asymptotic ABS-PF} and \texttt{Dynamic ABS-PF} after the names of the optimization problems defined in Section~\ref{s:dynamic_opt}---for several cellulars scenarios, including different mobility behaviors, group sizes and network area densities. In particular, we show that:
\begin{itemize}
\item the analytical model derived in Section~\ref{s:analysis} provides accurate results as validated against a packet-level simulator;
\item the analytical model yields reliable results even when some of our assumptions do not hold (for instance, user groups are not collapsed in one spatial point) and under truly simulated user mobility models;
\item our practical optimization solutions provide outstanding results in terms of throughput, transmission efficiency and user fairness, when compared to state-of-the-art approaches;
\item \texttt{Asymptotic ABS-PF} and \texttt{Dynamic ABS-PF} outperform current standardized solutions when applied to optimize network operations in realistically evaluated scenarios, such as a dense-urban scenario covered with a heterogeneous cell deployment. 
%in a urban area.
\end{itemize}
%\vspace{-0.5mm}
%
%present a numerical evaluation for the model derived in Section~\ref{s:analysis} and we validate it against a packet-level simulator written in MATLAB. With this, we can properly validate our model when some assumption does not hold (for instance, user groups are not collapsed in one spatial point) and the user mobility is truly simulated.  
%Additionally, we show the effectiveness of our practical solution by carrying out a comparison with state-of-the-art approaches. 
%

\subsection{Experimental Setup}

We have developed a Matlab event-driven packet simulator to study regular deployments, namely synthetic scenarios. We consider two radically different scenarios to show that our approach brings substantial benefits under very different operational conditions: $(i)$ a simple cellular network with $7$ base stations using the same transmission power ($250$mW) and regularly spaced with Inter-Site Distance (ISD)  equal to $50$m in a $150$m~$\times$~$150$m area, and $(ii)$ the heterogeneous O2 deployment used for the example discussed in Section~\ref{s:example}.
Only downlink transmissions are taken into account in our simulation, with a $20$MHz bandwidth and with the pathloss model described in~\cite{TR_942}. 
%Nevertheless, uplink transmission issues may be similarly tackled by applying our approach. 
Transmission queues are fully backlogged. Relay group $c$ can include up to $U_c$ users: whenever $U_c\!=\!1$ we will be dealing with the case of no cooperative mmD2D communications. 
When active, base stations apply a weighted round robin policy to deliver the offered traffic to the relay groups, using the sizes of the groups as weights. We use the Random WayPoint (RWP) mobility model~\cite{waypoint} to move the center of gravity of each group within the simulated area, with a speed ranging from $1$ to $10$ m/s. 
For solving Problem~\texttt{ABS-PF} presented in Section~\ref{s:dynamic_opt}, we run the ABS optimizations every $T=500$ ms, with weights $w_c = U_c$ to achieve per-user fairness, $\alpha_k=1$, and $p=20$. 
%Patterns are built and distributed at that pace (such patterns are repeated cyclically by the base stations for $500$ ms). 
Network simulations last $500$ s, whereas user channel conditions are evaluated on a subframe basis, e.g., each $1$ ms. 
%
%Additionally, we evaluate our approach when applied to a realistic and heterogeneous base stations deployment in London city, UK. 
For the case of the O2 deployment in London, we deal with a high-dense area in which the RWP mobility is developed through the streets on the map. 
%
%We show the gain of applying our practical approach rather than keeping current legacy solutions. 
All presented results are provided 
%as the average resulting from 
with $95\%$ confidence intervals.
%$25$ simulation instances. 
%have been carried out to compute each of the reported values.

To assess the performance of our practical solutions, we evaluate \texttt{Asymptotic ABS-PF} and \texttt{Dynamic ABS-PF} in terms of system throughput and fairness, the latter being measured by means of the well-known Jain's Fairness Index (JFI). For the sake of comparison, we also consider a solution without ABS, namely \texttt{Legacy}, as well as another stochastic approach, using an optimization problem similar to \texttt{Dynamic ABS-PF} that maximizes system throughput rather than user fairness, i.e., it optimizes the sum of group throughputs rather than the sum of logarithms. We refer to such policy as \texttt{Max Throughput}. We further benchmark our approach against two practical state of the art heuristics: \texttt{BSB},  proposed in~\cite{TWC_sciancalepore_2016}, in which ABS is used to target a max-min utility function; and \texttt{DRONEE} \cite{asadi2014dronee}, in which relay groups (mmD2D groups in our case) are formed dynamically to improve throughput.
Lastly, we compare \texttt{Dynamic ABS-PF} to standard randomized ABS schemes with different ABS application ratios.

\subsection{Model Validation}
\label{model_validation}
In order to evaluate and validate the analytical model presented in Section~\ref{s:analysis}, in Fig.~\ref{fig:validation} we graphically provide a set of analytical results in terms of system throughputs $\Gamma^s$. For the validation, we use static scenarios wherein the ABS state and the position of nodes remain unchanged, so to compare the analytically derived throughputs (one for each state $s$) with long-run averages per each state observed in simulations. 
%(corresponding to $\hat{\Gamma}|_s$ when considered in saturation, as stated in Section~\ref{s:cell_system_analysis})
%by varying the applied ABS state $s$. 
Due to the huge computational effort required for simulating every possible ABS states, only some significant ABS states have been considered within the packet simulator. 
We mark with a red circle the ABS state corresponding to all base stations simultaneously active, to point out the impact on the system throughput of no ABS application.
%how no ABS application impacts .
Notably, we observe that simulation results closely follow those provided through the analysis, properly validating the accuracy of our study with and without mmD2D relay groups.

\ifsingle
\begin{figure} [!t]
\centering
	\includegraphics[width=0.50\textwidth]{figs/crosscheck_analysis.eps}
	\caption{Model validation through exhaustive simulations.}
	\label{fig:validation}
\end{figure}
\else
\begin{figure} [!t]
\centering
	\includegraphics[width=0.3\textwidth]{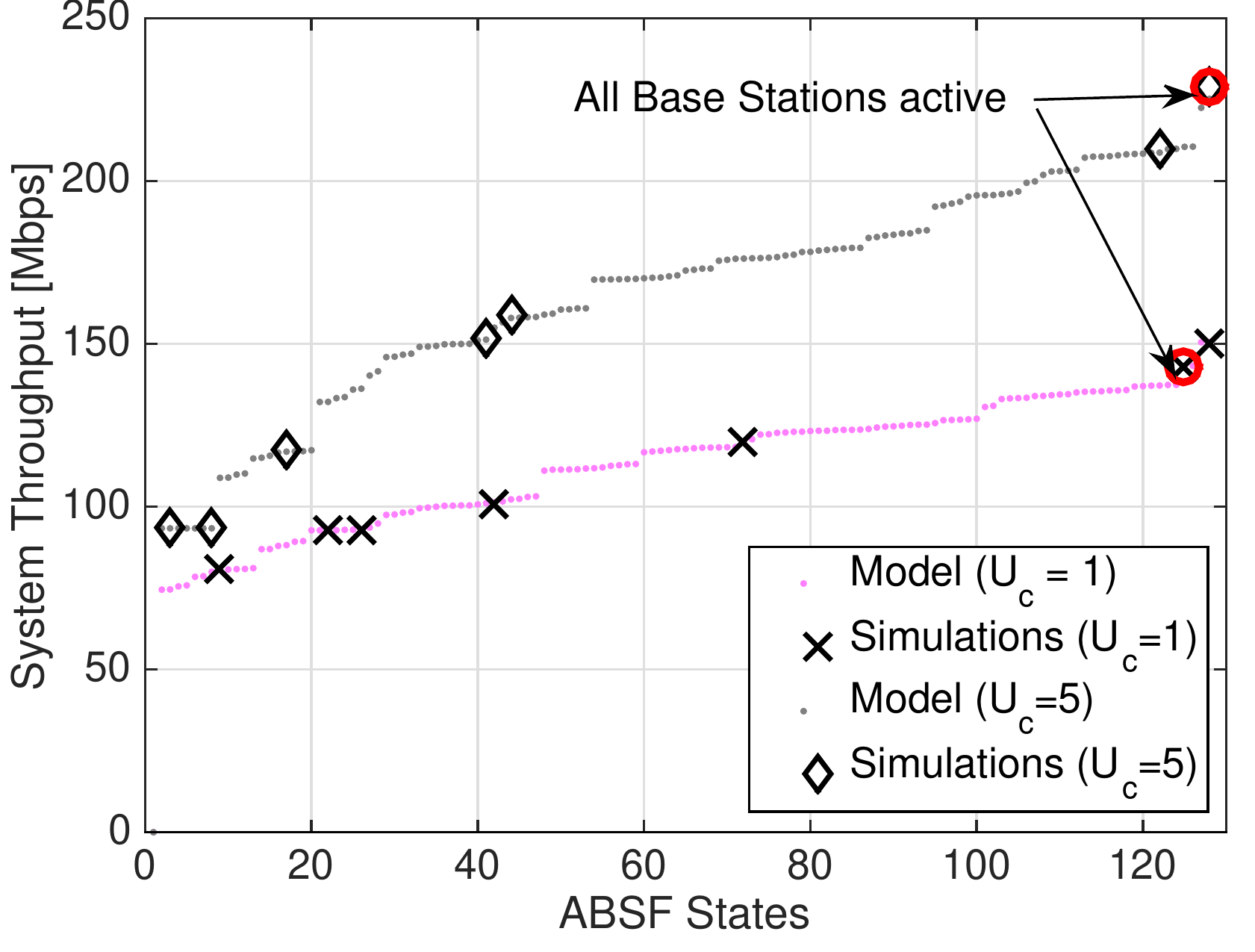}
	%\vspace{-3mm}
	\caption{Model validation through exhaustive simulations.%\vm{possiamo aggiungere qualche punto di simulazione?}
	}
	\label{fig:validation}
%	\vspace{-5mm}
\end{figure}
\fi

%\subsection{Optimal ABS Pattern Selection}
\subsection{Performance in Homogeneous Deployments}
\label{ss:optimal_selection}

We next assess the performance of our practical solution for synthetic deployments of regularly spaced base stations using the same transmission power. 
%through a stochastic ABS scheme adopting the solution of Problem~\texttt{ABS-PF}. We show the system performance in terms of aggregate system throughput and in terms of fairness, expressed by the well-known Jain Fairness Index
%%,\footnote{The Jain Fairness Index is used to determine whether the system resources are fairly shared. It can be calculated as JFI = $ \left (\sum\limits_{i=1}^n x_i \right )^2 \Big / \left ( n \sum\limits_{i=1}^n x_i^2 \right )$.} 
%(JFI). For the sake of completeness, for our evaluation we also consider a system without ABS, namely \texttt{Legacy}, as well as another stochastic approach, similar to Problem~\texttt{ABS-PF}, which optimally maximizes the total system throughput rather than the user fairness, i.e., the $\log$ function has been removed from the objective function in \eqref{pr:be-distr}. We refer to such policy as \texttt{Max Throughput}. Lastly, we benchmark our approach against a practical state of the art heuristic, namely \texttt{BSB},  proposed in~\cite{TWC_sciancalepore_2016}, in which a max-min utility function is targeted.

Figs.~\ref{fig:throughput} and \ref{fig:fairness} present system throughputs and fairness levels achieved with different ABS policies. 
Results are drawn for a few examples of user populations and numbers of relay groups. In the figures, the x-axis reports the number of considered users, and, when applicable, the number of relay groups. Each group $c$ consists of $U_c$ users, where $U_c$ is a uniform random variable drawn between $1$ and $5$. Note that, figures report two cases with $150$ users, and two cases with $300$ users, i.e., with and without groups. Therefore, it is easy to observe both the impact of the user population size as well as of D2D communications. 
%the way the same users could collaborate through D2D communications. 

In Fig.~\ref{fig:throughput}, the \texttt{Legacy} scheme shows an acceptable level of throughput even when compared to the \texttt{Max Throughput} scheme, but  only when mmD2D is not used. Our \texttt{Asymptotic ABS-PF} and \texttt{Dynamic ABS-PF} schemes provide reasonable results in absence of mmD2D, outperforming the heuristic provided by~\texttt{BSB} and achieving similar throughputs as the \texttt{Legacy} scheme. Notably,  in this homogeneous scenario, \texttt{Asymptotic ABS-PF} obtains even better throughput than \texttt{Dynamic ABS-PF}.
However, both stochastic ABS patterns do not help much in terms of throughput, unless mmD2D is enabled. This confirms that ABS, even when optimized, is not a suitable solution on its own. Instead, in combination with mmD2D, stochastic ABS patterns make the difference.  
Moreover, \texttt{Legacy} and \texttt{BSB} schemes do not take advantage of mmD2D relay groups, and their performance figures only slightly change with the number of users and groups. 
In contrast, \texttt{Max Throughput} leverages the transmission efficiency enhancements due to opportunistic mmD2D relay and significantly boost throughputs. In all cases, \texttt{Max Throughput} represents the highest achievable network throughput. Therefore, Fig.~\ref{fig:throughput} reveals the net potential of mmD2D and the fact that little gain can be expected by any scheme unless mmD2D is jointly enforced.   

Note that, as visible in Fig.~\ref{fig:throughput}, user density plays a very minor role in terms of system throughput. Conversely, density has a huge impact on fairness, as evaluated in Fig.~\ref{fig:fairness}, wherein we do not report results for \texttt{Legacy} and \texttt{Asymptotic ABS-PF}, since those schemes are perfectly fair by definition in a completely homogeneous scenario like the one under analysis, at least on the long run.\footnote{Note that, in the homogeneous case, all users (groups) have the same spatial distribution. Thereby, \texttt{Legacy} as well as any ABS strategy that does not change/adapt over time---as it happens with \texttt{Asymptotic ABS-PF}---results in the same asymptotic throughput for all users (groups).} 
In the figure, \texttt{Dynamic ABS-PF} exhibits very powerful results when compared to \texttt{BSB} and \texttt{Max Throughput}. However, we need to remark that \texttt{BSB}, as the network becomes denser, shows better results in terms of fairness at the expenses of a very poor system throughputs. Nonetheless, \texttt{Dynamic ABS-PF} yields fairness levels very close to optimal fairness metrics in all cases (i.e., very close to $1$). This confirms that stochastically-issued ABS patterns and mmD2D in combination are suitable for achieving high fairness while improving throughput. 
    
\ifsingle
\begin{figure} [!t]
\centering
	\includegraphics[width=0.49\textwidth]{figs/throughput.eps}
%\vspace{-8mm}
	\caption{\footnotesize System throughput under different optimization policies.}
	\label{fig:throughput}
%\vspace{6mm}
	\includegraphics[width=0.49\textwidth]{figs/fairness.eps}
%\vspace{-8mm}
	\caption{\footnotesize Jain's Fairness Index under different optimization policies.}
	\label{fig:fairness}
\end{figure}
\else
\begin{figure} [!t]
\centering
	\includegraphics[width=0.35\textwidth]{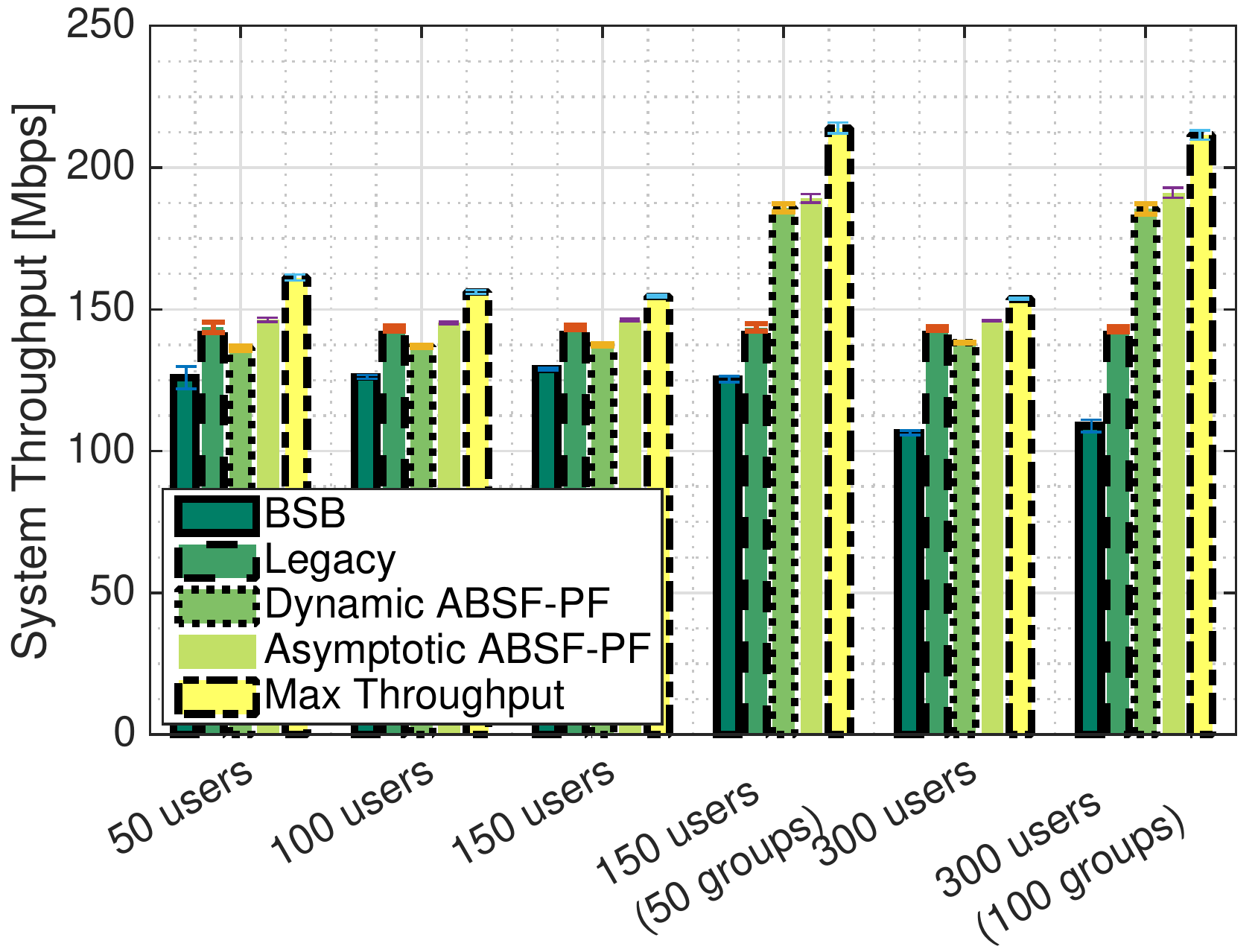}
%\vspace{-5mm}
	\caption{\footnotesize System throughput under different optimization policies.}
	\label{fig:throughput}
%\vspace{3mm}
	\includegraphics[width=0.35\textwidth]{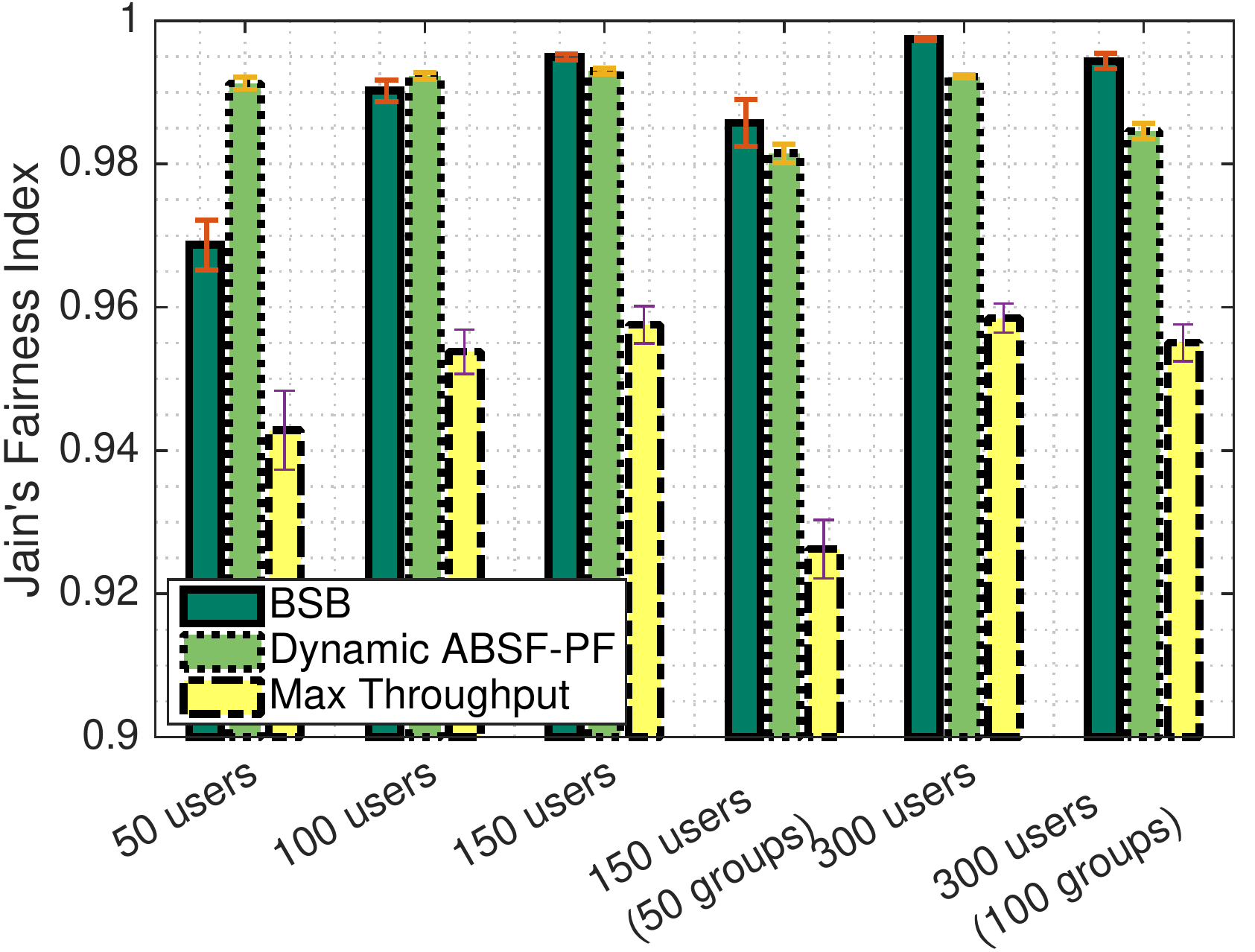}
%\vspace{-5mm}
	\caption{\footnotesize Jain's Fairness Index under different optimization policies.}
	\label{fig:fairness}
%\vspace{-6mm}
\end{figure}
\fi

Next we compare \texttt{Dynamic ABS-PF} to standard ABS implementation with typical fixed application ratios, ranging from $4/8$ to $6/8$ (i.e., blanking from $50\%$ to $25\%$ of the subframes at random), as suggested in~\cite{absf}. 
Here, blank subframes are randomly chosen by each base station independently, and the resulting patterns are automatically repeated to fill up the standard ABS pattern of $80$ subframes. The 80-subframe long pattern is then repeated indefinitely~\cite{absf}.     
Specifically, 
Fig.~\ref{fig:throughput_stocastic} compares system throughputs achieved with \texttt{Dynamic ABS-PF} and when three fixed ABS application ratios are applied to the system (with and without mmD2D relay). 
%The denser the network, the higher the spectral efficiency of our \texttt{ABS-PF} proposal. Conversely, 
The figure shows that standard ABS schemes do not bring significant throughput improvements. Moreover, when fairness issues are considered, in Fig.~\ref{fig:fairness_stocastic}, \texttt{Dynamic ABS-PF} exhibits strong advantages with respect to fixed ABS application ratios.

\ifsingle
\begin{figure} [!t]
\centering
	\includegraphics[scale=0.47]{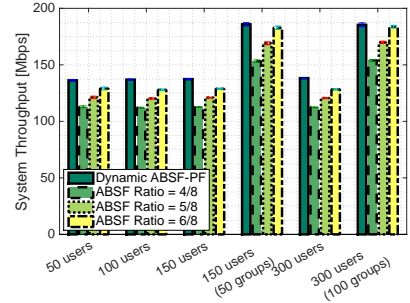}
%\vspace{-4mm}
	\caption{\footnotesize System throughput under different ABS schemes.}
	\label{fig:throughput_stocastic}
%\vspace{-3mm}
\end{figure}

\begin{figure} [!t]
\centering
	\includegraphics[scale=0.47]{figs/fairness_stocastic.eps}
%\vspace{-4mm}
	\caption{\footnotesize Jain's Fairness Index under different ABS schemes.}
	\label{fig:fairness_stocastic}
%\vspace{-5mm}
\end{figure}
\else
\begin{figure} [!t]
\centering
	\includegraphics[scale=0.31]{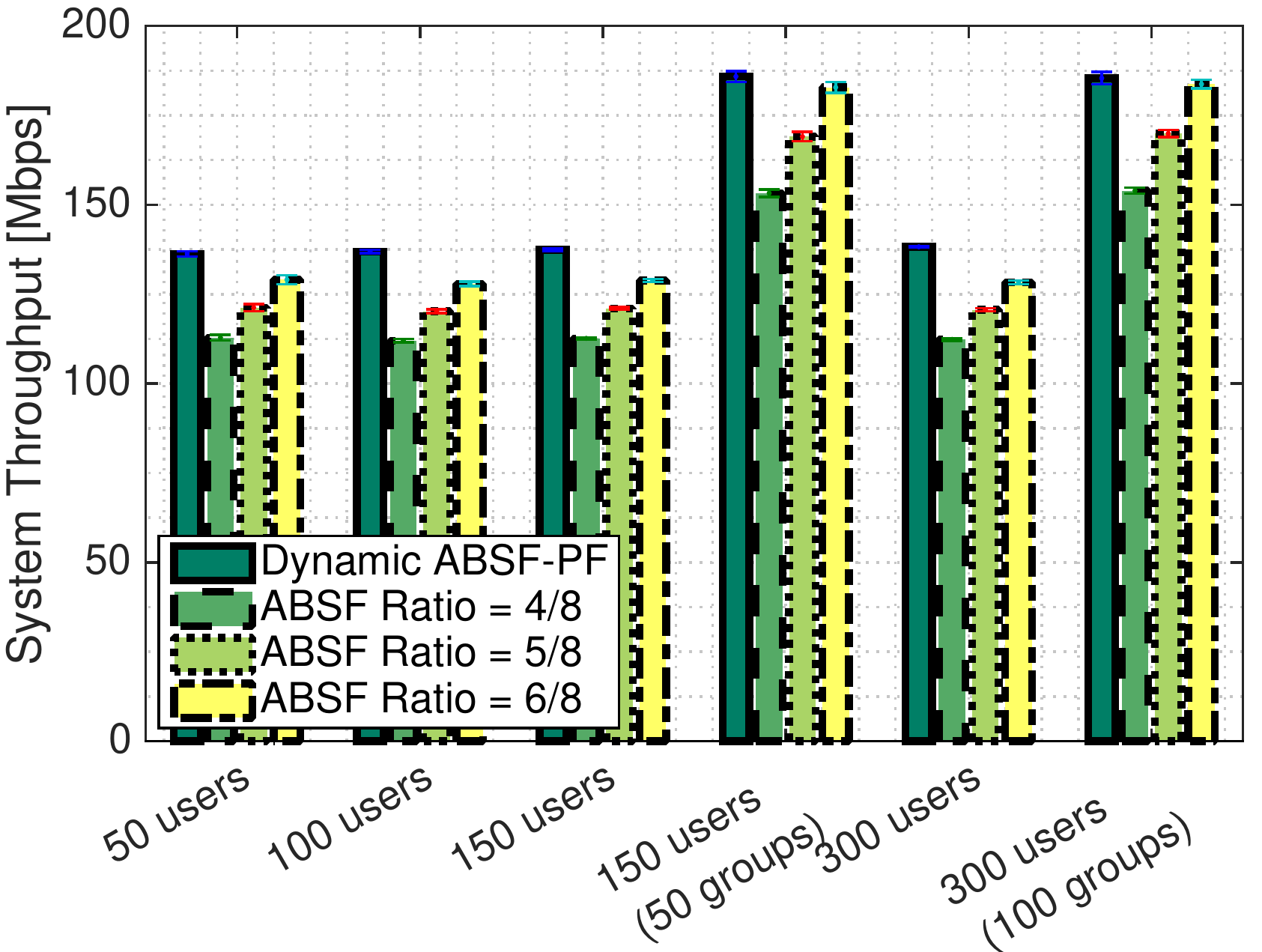}
%\vspace{-4mm}
	\caption{\footnotesize System throughput under different ABS schemes.}
	\label{fig:throughput_stocastic}
%\vspace{-3mm}
\end{figure}

\begin{figure} [!t]
\centering
	\includegraphics[scale=0.31]{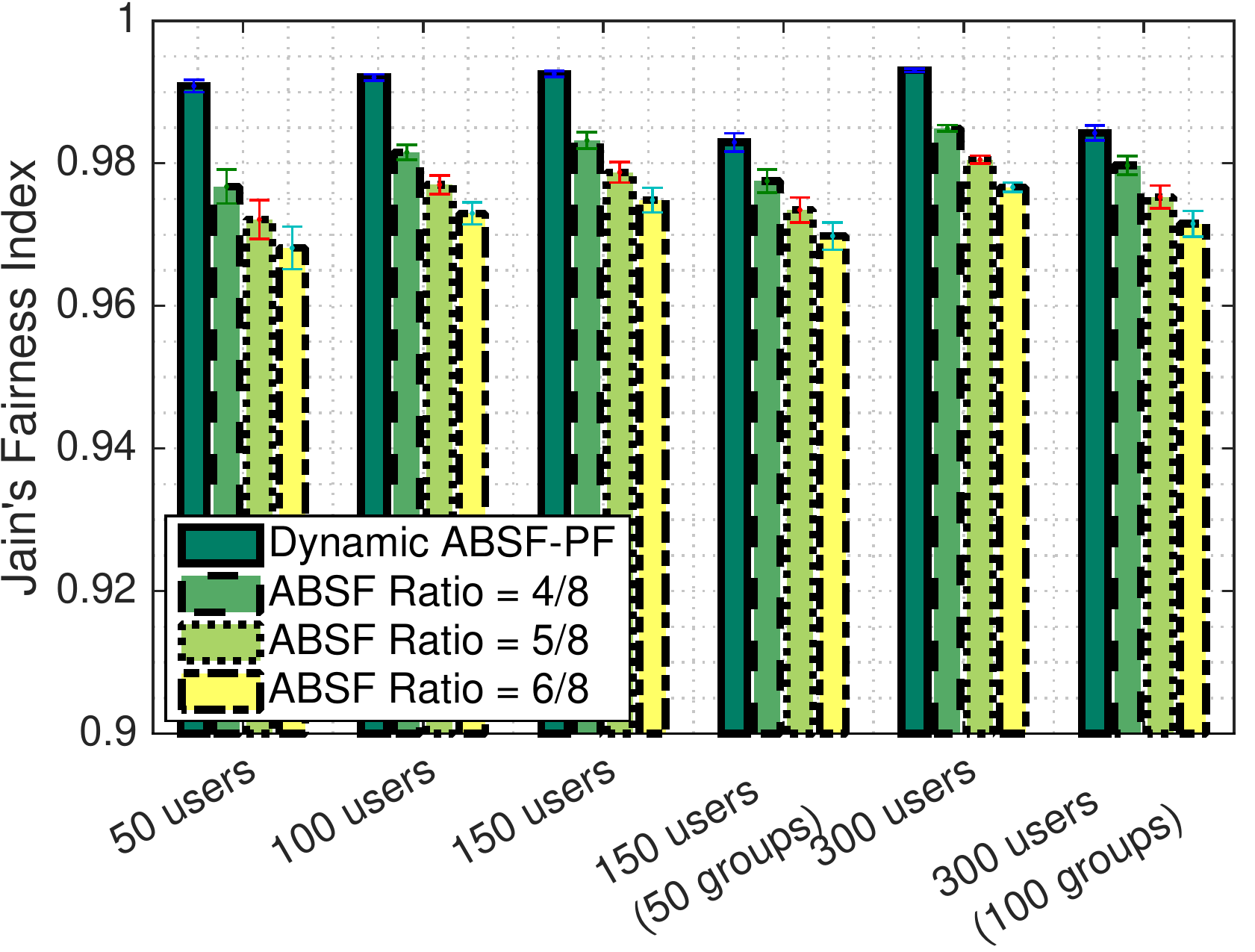}
%\vspace{-4mm}
	\caption{\footnotesize Jain's Fairness Index under different ABS schemes.}
	\label{fig:fairness_stocastic}
%\vspace{-5mm}
\end{figure}
\fi

In summary, our stochastic ABS scheme outperforms current standard solutions and offers a bargain trade-off between user fairness and spectral efficiency. \texttt{Asymptotic ABS-PF} and \texttt{Dynamic ABS-PF} provide extremely high fairness levels preserving reasonable throughput values, comparable to the maximum achievable when neglecting fairness issues. As final remark, exploiting mmD2D relay communications is a notable advantage of our scheme.  

%In general, while ABS with fixed application ratios provide good level of fairness and sufficient throughput levels, our \texttt{ABS-PF} scheme outperforms the current standard solutions and shows interesting potential gain.
% of applying enhanced ABS schemes to optimally balance user fairness and overall system throughput. 

%\vspace{2mm}
\subsection{Performance in a Heterogeneous Deployment}
Since homogeneous base station deployments might bias our results, we next take into consideration a realistic scenario, focusing our attention on a particular use case: London city, as previously presented in Section~\ref{s:example} for the area reported in Fig.~\ref{fig:london_deployment}. 
%{\color{red} We take a dense-urban area of $400$m~$\times$~$320$m close to Oxford Circus metro station. Based on its real topography, we place only the base stations under the control of O2 mobile network operator.}\!\footnote{All information are retrieved from the OFCOM reports, available at \url{http://stakeholders.ofcom.org.uk/sitefinder/sitefinder-dataset/}
%, as other operators use other frequencies and, thus, they do not interfere with such base stations.
%}
%{\color{red} In this area, $9$ base stations are built with their real transmission parameters, as illustrated in Fig.~\ref{fig:london_deployment}.}
%
%\begin{figure} [!t]
%\centering
%	\includegraphics[width=0.30\textwidth]{figs/london_picture.png}
%	\vspace{-3mm}
%	\caption{\footnotesize O2 deployment in London city - Oxford Circus.}
%	\label{fig:london_deployment}
%	\vspace{-3mm}
%\end{figure}
%
We consider a total amount of $1000$ users, as the user density for such area is $\sim \! 8000$ users/km~$\!\!^2$ and the area considered is $0.128$ km~$\!\!^2\!$~\cite{london_report}.
Users are placed within the considered area following two distributions: one guides users mobility behavior along the streets while the other characterizes the static user positions when they are within the buildings. Along our simulations, we vary the ratio $\Delta$ between the averages of those two distributions to model different day-time periods.
Mobile users follow a constrained RWP model: $(i)$ they select randomly a speed and a destination location within a valid street of the map, $(ii)$ then they follow the shortest path and reach the new destination by following the streets of the map. When relay groups are in place, the group, e.g., its center of gravity, follows the mobility model rules on the streets while the users of the group are randomly placed around the center of gravity. Users within the buildings are statically allocated at random according to a uniform distribution.

%Similarly to what presented in Section~\ref{ss:optimal_selection}, we compare our practical approach with \texttt{Legacy}, \texttt{BSB}, and \texttt{Max Throughput}, and, due to lack of space, we omit results for ABS with fixed application ratios, that are however aligned with the results presented for the homogeneous case. 
%However, since we have shown in Sections~\ref{ss:optimal_selection} that D2D relay is always beneficial, 
%Here we consider two benchmarking schemes in which D2D groups can be formed also under \texttt{Legacy} and \texttt{BSB}. We refer to the resulting schemes as \texttt{Legacy-D2D} and \texttt{BSB-D2D}. 
Since we have shown in Sections~\ref{ss:optimal_selection} that mmD2D relay is always beneficial, here we consider two benchmarking schemes in which mmD2D groups are also present. In the first scheme, we allow the formation of mmD2D groups also under \texttt{BSB}, while in the second scheme we use the \texttt{DRONEE} mechanism defined in~\cite{asadi2014dronee} to form clusters dynamically, under legacy base station operation (no ABS). We refer to the resulting schemes as \texttt{BSB-D2D} and \texttt{DRONEE}, respectively.
We have carried out different simulations to evaluate the realistic deployment in different operational timeframes. For each timeframe, we properly model the ratio $\Delta$ between the distribution of users moving along the streets and the users staying within the buildings in the following way: $(i)$ {\it Peak Hours}, during lunch time, $\Delta = 70\!:\!30$, $(ii)$ {\it Business Hours}, during morning and afternoon, $\Delta = 40\!:\!60$, $(iii)$ {\it Night Hours}, $\Delta = 10\!:\!90$~\cite{london_report}.

In Fig.~\ref{fig:london_throughput}, we show the system throughput expressed for different schemes. We observe that the system throughput increases during peak hours, as most of the people are moving outside and, thus, exploiting opportunistic relay over mmD2D sidelinks brings an additional gain. \texttt{Asymptotic ABS-PF} and \texttt{Dynamc ABS-PF} perform quite well, showing similar throughput figures as \texttt{DRONEE} and significantly worse throughput results only if compared with \texttt{Max Throughput}. Note that \texttt{Asymptotic ABS-PF} obtains only slightly lower throughput than \texttt{Dynamic ABS-PF}, although it results in much lower fairness. 
More in detail, due to the heterogeneity of the realistic scenario, user fairness is significantly impaired compared to results obtained for a homogeneous deployment, as shown in Fig.~\ref{fig:london_fairness}. However, \texttt{Dynamic ABS-PF} unveils the great potentials of properly applying a dynamic ABS scheme, outperforming not only the asymptotic optimization scheme, but also \texttt{Max Throughput} and \texttt{DRONEE} solutions in terms of JFI by about $100\%$ and $160\%$, respectively. 
Moreover, the fairness achieved with \texttt{BSB-D2D} is comparable with or better than our proposal's one, though it provides much less throughput.

\ifsingle
\begin{figure} [!t]
\centering
	\includegraphics[scale=0.47]{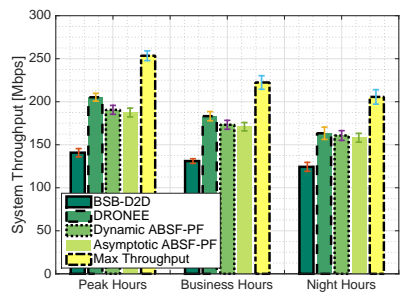}
%\vspace{-3mm}
	\caption{\footnotesize System throughput in London deployment.}
	\label{fig:london_throughput}
%\vspace{-3mm}
\end{figure}

\begin{figure} [!t]
\centering
	\includegraphics[scale=0.47]{figs/london_fairness.eps}
%\vspace{-4mm}
	\caption{\footnotesize Jain's Fairness Index in London deployment.}
	\label{fig:london_fairness}
%\vspace{-5mm}
\end{figure}
\else
\begin{figure} [!t]
\centering
	\includegraphics[scale=0.31]{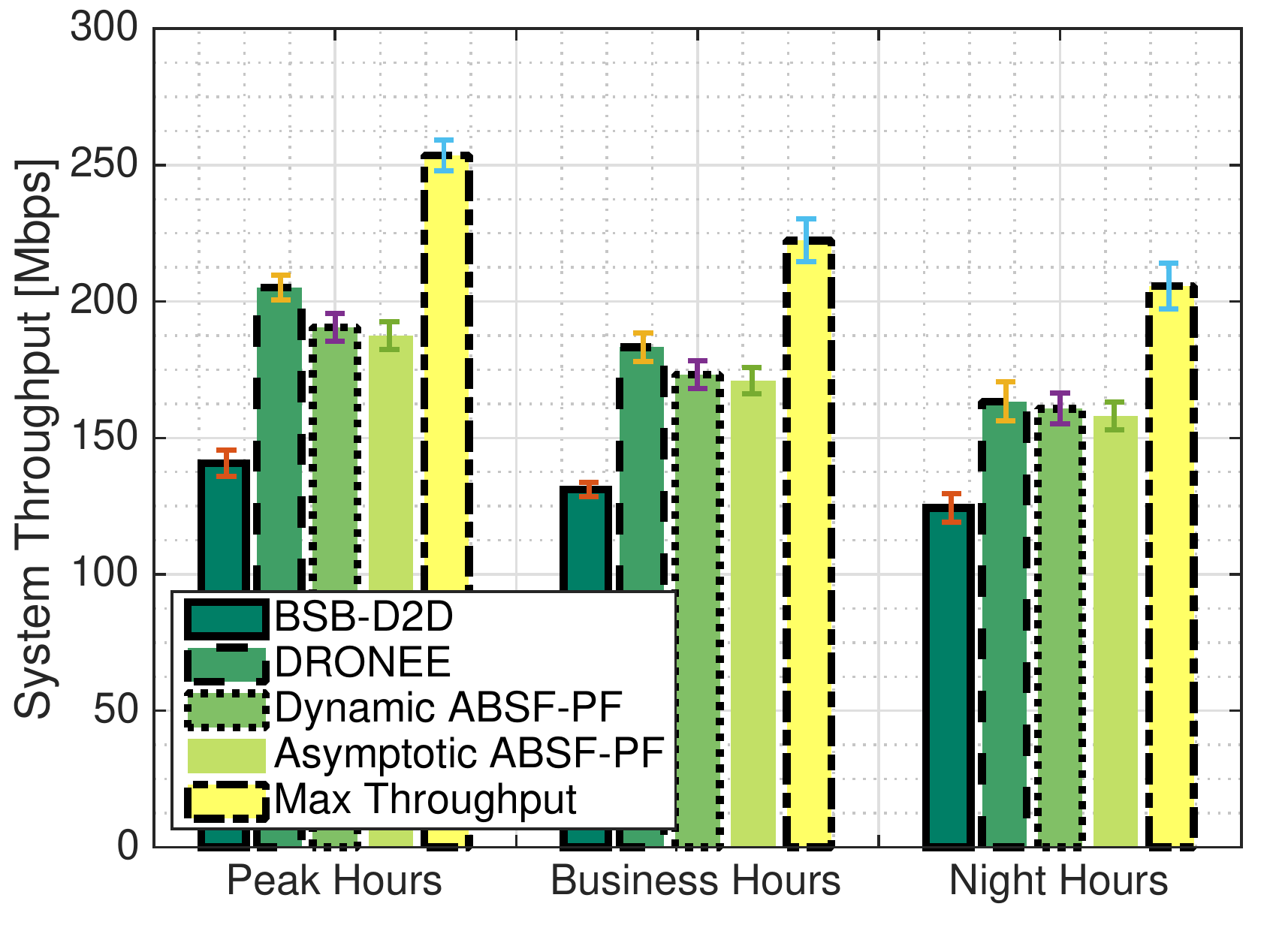}
%\vspace{-3mm}
	\caption{\footnotesize System throughput in London deployment.}
	\label{fig:london_throughput}
%\vspace{-3mm}
\end{figure}

\begin{figure} [!t]
\centering
	\includegraphics[scale=0.31]{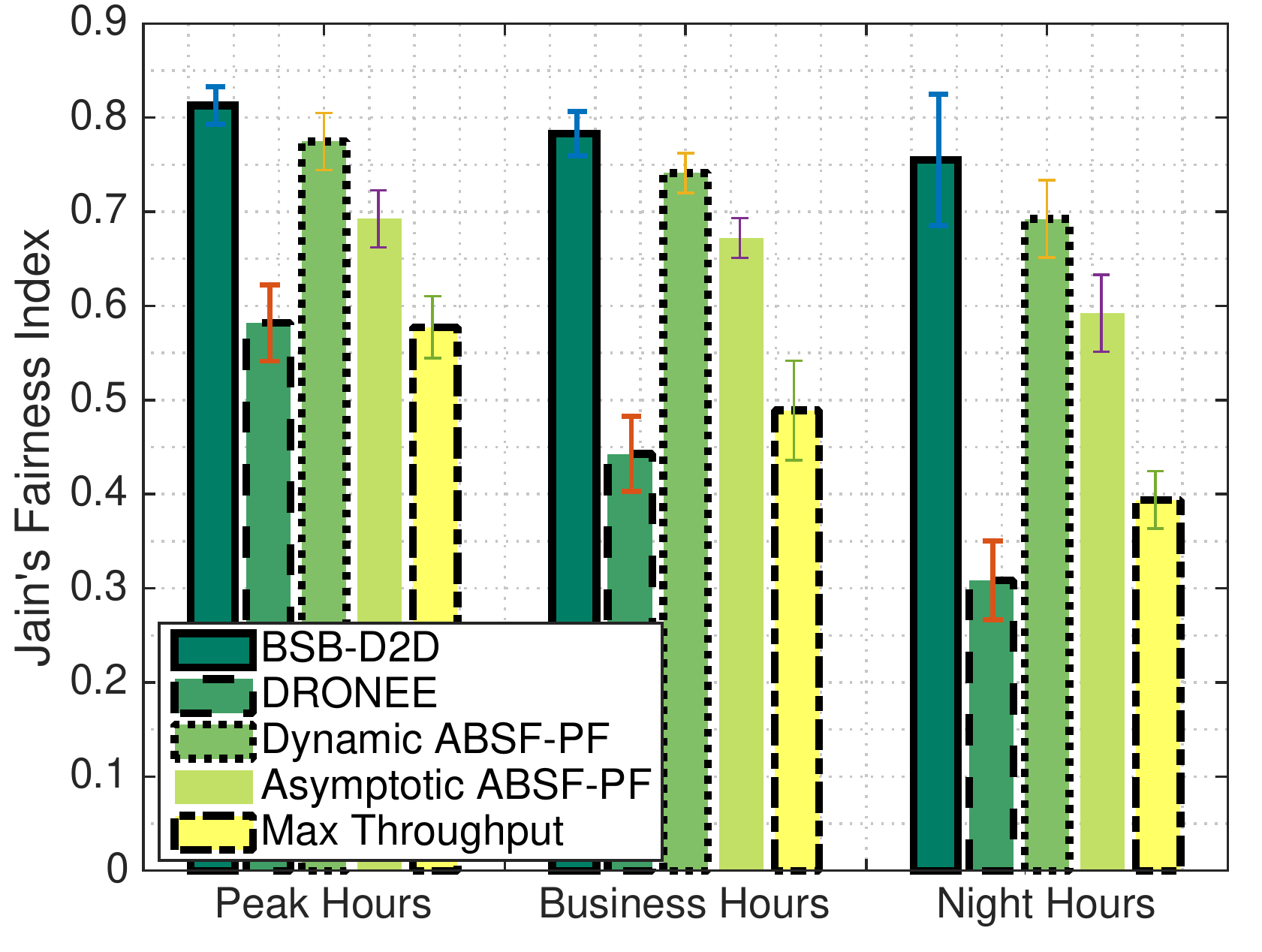}
%\vspace{-4mm}
	\caption{\footnotesize Jain's Fairness Index in London deployment.}
	\label{fig:london_fairness}
%\vspace{-5mm}
\end{figure}
\fi

In conclusion, our results illustrate how \texttt{Dynamic ABS-PF} is intrinsically better than standard approaches and pure (static) ABS optimizations targeting throughput or low interference. In particular, although we have shown that ABS alone is not able to boost throughput, \texttt{Dynamic ABS-PF} manages to handle the throughput enhancements achievable with mmD2D relay while achieving very high fairness levels.
However, in case of homogeneous scenarios, the importance of dynamic optimization becomes lower, and an asymptotic and less complex optimization approach can be used instead.

\section{Related work}
\label{s:related}

%
%
%
%\changes{Some recent papers to cite and comment here: 
%\begin{itemize}
%    \item ABS used for power saving~\cite{VSSC17}
%    \item ABS and power control on femto cells, to reduce the impact on macro~\cite{MSRTF17}
%    \item using ABS to enable the coexistence of unlicensed networks (LTE + LTE-U and WiFi)~\cite{gawlowicz2018enabling},  a patent is available on this topic~\cite{melodia2018method}. 
%    \item ABS pricing scheme \cite{li2018pricing}
%    \item compute ABS patters by means of machine learning, 10\% suboptimal results~\cite{lynch2018ensemble} 
%\end{itemize}
%\vspace{1cm}
%}

%Hereinafter, we comment on the related work about D2D and ICI coordination solutions separately, as no joint approaches have been investigated so far.

Recent studies on D2D communications show the feasibility of such schemes, including under opportunistic scheduling assumptions requiring control decisions at millisecond timescales~\cite{AMG16}. The authors of \cite{andreev2014ComMag} describe how D2D in unlicensed spectrum has a higher implementation opportunity because it requires minor changes in the existing standards. In \cite{karvounas2014ComMag}, it is shown that D2D can be deployed in LTE networks and groups can be formed with WiFi-Direct.
The authors of~\cite{asadi2014dronee} and \cite{AMG16} propose a practical opportunistic scheme and a protocol for D2D over LTE and WiFi networks. 
%Zhou {\it et al.}~\cite{zhou2013TVT} propose an optimal resource utilization for  multicast relaying with D2D clusters. They provide a closed-form expression for the probability distribution function of the optimal number of relays in a cluster, and an intra-cluster retransmission scheme. 
The authors of~\cite{bao2013infocom} propose a clustering technique to increase the network capacity in dense networks. Such approach, namely {\it Dataspotting},  tries to avoid to send twice the same content within the same geografical area. 
%As concerns potential performances, 
%\cite{doppler2013patent, seppala2011WNCN, zhou2013TVT}  focus on D2D clustering in cellular networks. T
%the authors of~\cite{doppler2013patent, seppala2011WNCN} 
%the authors of~\cite{doppler2013patent} 
%show via simulations that D2D groups increase the throughput by up to $66\%$ in comparison to legacy systems. 
The impact of D2D (with WiFi Direct) and ABS (with BSB) has been partially studied via simulation in~\cite{ASM15} under finite load assumptions. However, in that work D2D relay speed is comparable to cellular speed, and the authors conclude that D2D and ABS in combination can bring quite limited value added.    

Note that our proposed opportunistic D2D relay is novel with respect to state of the art solutions because it uses mmWave and does not create performance bottlenecks in clusters (or relay groups). Moreover, a compound analytic approach to D2D and ICIC is completely missing so far in the literature.

While D2D is attracting the attention of industrial players, ABS has already become popular due to its trade-off between performance improvement and low implementation complexity, as widely shown by \cite{ghosh2012heterogeneous}.
ABS has been proposed for throttling macro base station transmissions in presence of micro and pico cells. However, much more interesting results have been shown when ABS has  been adopted for all kind of cells. 
Deterministic ABS approaches like in~\cite{TWC_sciancalepore_2016} 
have shown how pre-computed time-patterns can lead ABS-enabled cellular systems to near-optimal working points. 
In~\cite{TWC_sciancalepore_2016}, the authors takle the ICIC problem by inspiring a heuristic solution which provides a near-optimal deterministic ABS pattern to schedule all required traffic, when content distribution systems are involved. 
Another interesting solution, such as~\cite{kamel2012performance}, deals with heterogeneous networks in which a macro base station coordinates the activity of small base stations to improve throughput performance when sharing a limited area. 
%For a limited traffic distribution, the  deterministic approach proposed in~\cite{cierny2013on} aims to determine the best blanked subframe density according to a given traffic distribution. 
%
More advanced solutions focus on the pattern reuse which directly guides the ABS activity pattern. In particular,~\cite{son2011utility} derives the best temporal pattern duration, given a set of chosen patterns to maximize the total user throughput. However, as proved in our work, while some scenarios may adversely impact on the system throughput, a pure throughput maximization can lead to highly unfair throughputs.
Lastly, many other solutions focus on the trade-off between throughput and fairness using different approaches. \cite{singh2014joint} and \cite{NiuGLSJV15}, for example, propose to apply user association and D2D multi-hop offloading, respectively, to achieve such a goal. Nevertheless, differently from our approach, many changes to the current cellular architecture should be done in order to implement such solutions.

Our work completely differentiates from the literature, since we are the first to analytically study and design a joint scheme to provide high spectral efficiency by leveraging cooperative D2D opportunistic communications using mmWave sidelinks, while at the same time adjusting user fairness by means of ABS. 

\section{Conclusions}
\label{s:conclusions}

We have realistically modeled the performance of a cellular network with ABS and D2D relay with mmWave sidelinks, whose availability if not only foreseen in 5G system, but they are already available in commercial off-the-shelf devices implementing IEEE 802.11ad specifications. 

We have shown that interference coordination and transmission efficiency---and hence throughput and fairness---can be addressed simultaneously by stochastically coordinating gNBs with ABS and by leveraging opportunistic mmWave outband sidelinks. The analysis shows that ABS and relay operate orthogonally, so that they can be optimized separately. 

%Our analysis shows that legacy ABS is not able to provide fairness and throughput guarantees. In fact, it can only guarantee cleaner and more efficient channel transmissions. However, we have shown that ABS can be used to tune fairness in both static and dynamic topologies of users. To boost throughput while blanking subframes with ABS, the adoption of mmD2D relay sidelinks is key and it can be implemented independently of ABS parameters. Indeed, mmD2D results in improved throughput under any ABS state.

Hence, we have studied proportional fairness optimization problems in which the existence of sidelinks available for relay is an input, whereas ABS state probabilities are decision variables. Further optimizing on relay groups and sidelinks would be possible but required to study user interactions and incentives, which goes beyond the communication technology. So we have decided to keep that aspect for future work.  

The two problems we have formulated tackle fairness targets that are respectively long-term (i.e., asymptotic performance for static and/or dense topologies) and dynamic (i.e., based on history and mobility of users). 
%
%A network controller, whose architecture and operation is out of the scope of this work, would then be responsible to timely enforce the optimal ABS state probabilities computed by solving the formulated problems.
%
We have validated our proposals by means of numerical simulations that cover uniform ideal scenarios as well as a very realistic urban scenarios. Our results show that the compound impact of {\it stochastic} ABS pattern and {\it opportunistic} relay over mmWave sidelinks is highly beneficial in terms of both throughput and fairness.

\bibliographystyle{IEEEtran}
%\bibliography{biblioinfocom2016}
{
\footnotesize
\bibliography{biblioinfocom2016}
}

\appendix
\label{appendix_a}

\subsection{Proof of Proposition~\ref{prop:CDF}}

Let us first compute the CDF of the SINR conditional to a given value of the noise r.v. $Z$. By definition, we have the following expression:  

\begin{align}
& F_{\gamma|Z=z}(x) = \text{Pr} \left\{ \frac{S}{z + \sum_{j=1}^k I_j} \le x \right\} \nonumber \\
& \quad = 
\! \int_0^{\infty} \! 
\! \int_0^{\infty} \!
\dots
\! \int_0^{\infty} \!
\text{Pr} \left\{ S \le x \left( z + \left. \sum_{j=1}^k I_j \right| I_j = a_j  \right) \right\} 
\nonumber \\
& \quad \cdot \prod_{j=1}^k f_{I_j}(a_j) \; da_1 da_2 \dots da_k 
\nonumber \\
& \quad = 
\! \int_0^{\infty} \! 
\! \int_0^{\infty} \!
\dots
\! \int_0^{\infty} \!
\text{Pr} \left\{ S \le x \left( z + \sum_{j=1}^k a_j  \right) \right\} 
\nonumber \\
& \quad \cdot 
\prod_{j=1}^k
\left(
	\lambda_j e^{-\lambda_j a_j}
\right)
\;
da_1 da_2 \dots da_k
\nonumber \\
& \quad = 
\! \int_0^{\infty} \! 
\! \int_0^{\infty} \!
\dots
\! \int_0^{\infty} \!
\left( 1 - e ^ {-\lambda_S \left( z + \sum_{j=1}^k a_j  \right ) x}\right)
\nonumber \\
& \quad \cdot 
\prod_{j=1}^k
\left(
	\lambda_j e^{-\lambda_j a_j}
\right)
\;
da_1 da_2 \dots da_k
\nonumber \\
& \quad = 
1 - e ^ {-\lambda_S z x} \prod_{j=1}^k \frac{\lambda_j}{\lambda_j + \lambda_S x}.
\end{align}

The CDF of the SINR is then computed by removing the condition on $Z$: 
\begin{align}
    F_{\gamma} (x)& = E \left[ F_{\gamma | Z } (x) \right] \nonumber \\
    & = E \left[ 1 - e ^ {-\lambda_S x Z} \prod_{j=1}^k \frac{\lambda_j}{\lambda_j + x\lambda_S }  \right] \nonumber \\
    & = 1 -  E \left[ e ^ {-\lambda_S x Z  } \right]  \prod_{j=1}^k \frac{\lambda_j}{\lambda_j + x\lambda_S }. 
\end{align} 

In the above expression, the term $E \left[ e ^ {-\lambda_S x Z } \right]$ is the LST of the noise power computed in $\lambda_S x$. Assuming AWGN noise with zero average and variance $\sigma^2 = N$, $Z$ is simply the square of a Gaussian r.v., so that the LST at $\lambda_S x \ge 0$ is easy to compute as follows:
\begin{align}
    E \left[ e ^ {-\lambda_S x Z } \right] & = \int_{-\infty}^{+\infty} e^{-\lambda_S x y^2}
    \frac{e^{-\frac{y^2}{2 \sigma^2} }} {\sqrt{2 \pi \sigma^2}} dy \nonumber \\
    & = \frac{1}{\sqrt{1 + 2 \lambda_S x \sigma^2}}.
\end{align}
Therefore, the expression of the CDF of $\gamma$ in presence of AWGN is 
\begin{align}
    \!\!\!F_{\gamma}(x) & = 1 - \frac{1}{\sqrt{1 + 2 \lambda_S N x}} \prod_{j=1}^k \frac{\lambda_j}{\lambda_j + x\lambda_S }, \quad \forall x \ge 0.
\end{align}

\end{document}